\newtheorem{theorem}{Theorem}[section]
\begin{document}
	\title{Modified General Relativity and dark matter}
	\author{Gary Nash
		\\University of Alberta, Edmonton, Alberta, Canada,T6G 2R3\\gnash@ualberta.net \footnote{Present address Edmonton, AB.}
	}
\date{~April 4, 2023}
\maketitle
\vspace{2mm}
\begin{abstract}
Modified General Relativity (MGR) is the natural extension of General Relativity (GR). MGR explicitly uses the smooth regular line element vector field $(\bm{X},-\bm{X}) $, which exists in all Lorentzian spacetimes, to construct a connection-independent symmetric tensor that represents the energy-momentum of the gravitational field. It solves the problem of the non-localization of gravitational energy-momentum in GR, preserves the ontology of the Einstein equation, and maintains the equivalence principle. The line element field provides MGR with the extra freedom required to describe dark energy and dark matter. An extended Schwarzschild solution for the matter-free Einstein equation of MGR is developed, from which the Tully-Fisher relation is derived, and the gravitational energy density is calculated. The mass of the invisible matter halo of galaxy NGC 3198 calculated with MGR is identical to the result obtained from GR using a dark matter profile. Although dark matter in MGR is described geometrically, it has an equivalent representation as a particle with the property of a vector boson or a pair of fermions; the geometry of spacetime and the quantum nature of matter are linked together by the unit line element covectors that belong to both the Lorentzian metric and the spin-1 Klein-Gordon wave equation. The three classic tests of GR provide a comparison of the theories in the solar system and several parts of the cosmos.  MGR provides the flexibility to describe inflation after the Big Bang and galactic anisotropies.
\end{abstract}
Keywords: {general relativity; gravity; dark energy; dark matter; energy-momentum.}
\vspace*{0.4cm}
\newline This is an update of an article published in International Journal of Modern Physics D. The original authenticated version is available online at: https://doi.org/10.1142/S0218271823500311. 
\section{Introduction}
Einstein developed his equation of General Relativity (GR)
\begin{equation}\label{E}
	G_{\alpha\beta}=\frac{8\pi G}{c^{4}}T_{\alpha\beta}
\end{equation}
in a four-dimensional Riemannian spacetime with the understanding that spacetime is locally Minkowskian during free-fall. He believed that gravity gravitates and postulated \cite{1,2} that the total energy-momentum tensor consisted of the energy-momentum from ponderable matter plus that of the gravitational field itself.
\par   However, he could not construct a \emph{tensor} that described the energy-momentum of the gravitational field and introduced a pseudo-tensor instead. Later in 1922, he concluded \cite{3} that although the true form of the energy-momentum tensor is undetermined, it must contain the energy density of the gravitational field in addition to the energy density of matter. \par Other pseudo-tensors and approaches to describe gravitational energy-momentum were subsequently investigated for decades. Over the last century, failure to find a tensor that represents the energy-momentum of the gravitational field either led relativists to the generally accepted conclusion that it does not exist \cite{4,5}, or to the notion that GR does not need a tensor that describes the self-interaction of the gravitational field because the nonlinearity in the Einstein tensor provides for that. However, the second perspective leads to the same conclusion as that of the first. By writing the Lorentzian metric as a background metric plus a tensor representing a small perturbation around the background, the Ricci tensor can then be expanded to second order in the perturbed metric. From that, an energy-momentum tensor that depends only on the background gravitational field and contains terms quadratic in the perturbed metric and its first two covariant derivatives is obtained \cite{6}. While this tensor is adequate for the description of gravitational waves, it is not a proper description of the energy-momentum of the gravitational field because it is not invariant when the connection coefficients vanish as in free-fall. \par Gravitational energy-momentum is invariant during free-fall; it has the same value when expressed covariantly as it does during free-fall when the connection coefficients locally vanish. However, a connection-independent local expression for gravitational energy-momentum does not exist in GR. That is a consequence of the equivalence principle whereby the connection coefficients, and therefore gravity, locally vanishes during free-fall because it is always possible to construct a coordinate system where all of the first derivatives of the metric vanish at any point; no connection coefficients means no gravitational field and no local gravitational field means no local gravitational energy-momentum \cite{7}. \par The right-hand side of the Einstein equation is generally considered to represent all forms of matter and associated fields, but it is void of a tensorial expression for gravitational energy-momentum. That leads to a second issue with the Einstein equation from the fact that $ G_{\alpha\beta} $ describes the nonlinear interactions of the metric with itself and its first two derivatives, but on the other side of (\ref{E}), there is nothing whatsoever in a matter energy-momentum tensor that geometrically describes a dynamically changing metric. Einstein described his discontent with the meaning of his equation \cite{8} in 1936 as ``a building, one wing of which is made of fine marble (left side of the equation), but the other wing of which is built of low-grade wood (right side of the equation)." Thus, Einstein's equation is left with two fundamental issues to be resolved: the non-localization of gravitational energy-momentum and the inconsistent ontology of his equation. 
\par Fundamental to the solution to these problems, as developed in Modified General Relativity \cite{9} (MGR), was to re-instate Einstein's original postulate of a total energy-momentum tensor $ T_{\alpha\beta} $, which must contain a tensor that geometrically describes a dynamically changing metric. Gravity gravitates and this postulate maintains the ontology of his equation. This, in turn, demands the local conservation of total energy-momentum with the vanishing of $ \nabla^{\alpha}T_{\alpha\beta} $ and not $ \nabla^{\alpha}\tilde{T}_{\alpha\beta} $, where $ \tilde{T}_{\alpha\beta} $ is the matter component of $ T_{\alpha\beta} $. Since $ \tilde{T}_{\alpha\beta} $ is not divergenceless, it can be set proportional to an arbitrary non-divergenceless symmetric tensor $ w_{\alpha\beta} $, which is then orthogonally decomposed by the Orthogonal Decomposition Theorem (ODT) (a revised version of which is in Appendix A for reference). In addition to the Lovelock tensors $ g_{\alpha\beta} $ and $G_{\alpha\beta}  $, this introduces a new tensor, $ \varPhi_{\alpha\beta} $, which represents the energy-momentum of the gravitational field and \emph{completes} GR. 
\par $ \varPhi_{\alpha\beta} $ is constructed from the Lie derivative of both a Lorentzian metric and a product of non-vanishing covectors that explicitly belong to that metric, which generates the change of those dynamic fields along the flow of the line element vector $\bm{X}=f\bm{u}$ collinear with $\bm{u}$:
\begin{equation}\label{Phiab}
	\begin{split}
		\varPhi_{\alpha\beta}:&=	\frac{1}{2}\pounds_{X}g_{\alpha\beta}+\pounds_{X}u_{\alpha}u_{\beta}\\
		&=\frac{1}{2}(\nabla_{\alpha}X_{\beta}+\nabla_{\beta}X_{\alpha})+u^{\lambda}(u_{\alpha}\nabla_{\beta}X_{\lambda}+u_{\beta}\nabla_{\alpha}X_{\lambda}).
	\end{split}
\end{equation} That change may be immeasurably small, but it is still non-zero because even weak gravitational fields gravitate. Moreover, a tensor constructed from the Lie derivative has the \emph{essential} property that it is independent of the connection. When the connection coefficients  vanish under free-fall, $ \varPhi_{\alpha\beta} $ is equivalently expressed in terms of partial derivatives. Local Lorentz invariance is preserved and there is no conflict with the equivalence principle. Gravitational energy-momentum is invariant under free-fall and can be localized.  
\par It is important to emphasize that the vector field $\bm{u}$ employed in MGR has not been introduced arbitrarily like the vectors in the vector-tensor theories of Refs. \cite{10} and \cite{11}\,; it is already a fundamental part of a particular Lorentzian spacetime. Therefore, a discussion of line element vector fields and Lorentzian metrics is necessary to understand 
the structure of MGR.
\par As a first attempt to understand how the regular line element vectors provide the freedom to explain dark matter, simple ansatzes were used in Ref. \cite{9} for the vector components without investigating all related equations. While it was illuminating to discover the form of the extended Schwarzschild metric, the ansatzes used were far too simple.\par Thus, this paper is organized as follows. In Section 2, the structure of a Lorentzian spacetime is discussed, and a partial review of MGR is presented. Section 3 constructs the static spherically symmetric field equations in matter-free spacetime. Power series expressions for the two independent radial static components of $ \bm{X} $, $ X_{1} $ and $ X_{2} $, are thoroughly developed, from which the extended Schwarzschild solution is obtained. Conservation of gravitational energy during free-fall is discussed. Gravity gravitates, and the gravitational energy density is calculated to be twice the Newtonian result plus a dark energy contribution. Section 4 is a discussion of dark matter. The gravitational scaling of the dark parameter $ a_{0} $ is established, and the extended Newtonian force is developed from which the Tully-Fisher relation follows. The mass of the invisible matter halo of galaxy NGC 3198 is calculated to be 8.6$ \times10^{11}M\odot $, which is identical to that obtained using GR and the Navarro-Frenk-White (NFW) dark matter profile. That value was used to determine the parameter $ \zeta $, which is fundamental to the scaling of $ a_{0} $. It is shown that dark matter can be described both geometrically and as a particle with the property of a spin-1 unit vector boson or a pair of Hermitian spin 1/2 fermions. In Section 5, the three classic tests of General Relativity are recalculated in the extended Schwarzschild metric of MGR. The total perihelion advance of Mercury contains a correction from dark matter that is approximately an order of magnitude less than the contribution from the quadrupole moment of the Sun. The gravitational lensing of galactic clusters in MGR provides significant corrections to the GR results with no dark matter profile. In particular, the strong lensing in galactic cluster SDSS J0900+2234 calculated in MGR is comparable to that determined in GR with the NFW dark matter profile. The Shapiro time delay of the Crab pulsar calculated in MGR is larger than that obtained in GR from dark matter profiles commensurate with the fact that those dark matter profiles are not valid for distances from the galactic center of the Milky Way in the interval  $[5,30]$\;kpc. Cosmological aspects of MGR are discussed in Section 6. A comparison of the $\Lambda$CDM standard model of cosmology to MGR is presented. The anisotropy observed in galactic clusters, the cosmic microwave background (CMB), and the dark matter skeleton of curved spacetime are discussed relative to MGR. 
\section{Lorentzian spacetimes and Modified General Relativity}
\subsection{Lorentzian spacetimes}
A Lorentzian curved spacetime is described on a manifold $ \mathcal{M} $ with a Lorentzian metric $(\mathcal{M},g_{\alpha\beta})$. The manifold is endowed with four important topological properties: Hausdorff, connected, non-compact paracompact (paracompact), and boundaryless. A Hausdorff manifold demands that any two distinct points are separated by disjoint open neighbourhoods, which allows for physically distinct events to exist. Connectedness implies that communication is possible between separated entities. A manifold is paracompact if every open covering admits a locally finite refinement $\{ V_{\beta}\} $; that is, every point has an open neighbourhood meeting only finitely many $ V_{\beta} $'s. On such manifolds, the existence of closed causal curves is not guaranteed, contrary to the case when the manifold is compact. A manifold with a boundary would imply that there exists a physical ``barrier" which has not been observed in nature. Thus, the manifold describing a spacetime in this paper is considered to be Hausdorff, connected, paracompact and smooth without boundary.\par These properties of the manifold lead to the structure of a Lorentzian metric. A smooth regular vector field $ \bm{X} $ exists \cite{12,13} on any paracompact Hausdorff manifold $ \mathcal{M} $, as does the smooth regular unit vector $ \bm{u} $ collinear to $ \textbf{X} $: $\bm{X}=f\bm{u}  $ where $ f\neq0 $ is the scalar magnitude of $ \bm{X}$. The line element (or direction) field $(\bm{X},-\bm{X}) $ is defined as an assignment of a pair of equal and opposite vectors at each point of $\mathcal{M}$. It is like a vector field with an undetermined sign and is a one-dimensional vector subspace of the tangent space on $ \mathcal{M} $. It is well known that timelike, null and spacelike vectors belong to the tangent space of a Lorentzian manifold. The subspace of the tangent space where the line element field exists has been mainly used in GR to develop theorems \cite{14,15} on causality. However, the line element field is fundamental to the existence of a Lorentzian metric: a paracompact manifold admits a Lorentzian metric $ g_{\alpha\beta} $ if and only if it admits a line element field  \cite{14}.  \par A Lorentzian metric with a +2 signature is constructed from a Riemannian metric $ g^{+} $, which always exists on $\mathcal{M}$, and a unit line element vector $ \bm{u} $ \cite{12,13,14,15}:
\begin{equation}\label{gab}
	g_{\alpha\beta}=g_{\alpha\beta}^{+}-2u_{\alpha}u_{\beta}
\end{equation}where the Riemannian metric $g+$ is independent of $\bm{u}$. Thus, a Lorentzian spacetime does not exist without a line element field. It is not enough to say a vector field exists on $ \mathcal{M} $; a Lorentzian metric must have a directional attribute, which is what the line element field provides.\par There are an infinite number of Riemanninan metrics and an infinite number of line element vectors on $\mathcal{M}$ so a Lorentzian metric is highly nonunique. Nevertheless, there is a definite relationship between a particular
Lorentzian metric g and the vectors $\bm{u}$ and $\bm{X}$. MGR exploits the covectors explicit in a particular Lorentzian metric to construct the symmetric connection-independent tensor $\varPhi_{\alpha\beta} $ from the ODT; the covectors  $ u_{\alpha}$ and $u_{\beta} $ in (\ref{gab}) are precisely those in $\varPhi_{\alpha\beta} $. The flow vector $ \bm{X} $ in the Lie derivatives of (\ref{Phiab}) is collinear with $\bm{u}$. Thus, the vectors $ \bm{u} $ and $ \bm{X} $ employed in MGR are directly related to a given Lorentzian metric and are not introduced arbitrarily. Solutions to the Einstein equation now depend on $ \bm{u} $ and $ \bm{X} $, which provide the extra freedom required to describe dark energy and the invisible mass attributed to dark matter. GR is restricted to the tangent space of the Lorentzian manifold $\mathcal{M}$ because it does not explicitly depend on the line element field vectors.
\par Furthermore, on any paracompact Hausdorff manifold there exists a Lorentzian structure that does not generate closed timelike curves \cite{12}. The spacetime is assumed to admit a Cauchy surface and is therefore globally hyperbolic. This forbids the presence of closed causal curves. 
\subsection{Modified General Relativity}
The orthogonal decomposition of an arbitrary non-divergencless symmetric tensor $ w_{\alpha\beta} $ with the ODT (Appendix A) yields $w_{\alpha\beta} =v_{\alpha\beta}+\varPhi_{\alpha\beta} $. $ v_{\alpha\beta} $ is a linear sum of divergenceless symmetric tensors, which sum consists of the Lovelock tensors and other possible non-Lovelock tensors. In a four-dimensional spacetime, Lovelock's theorem \cite{Love} states that the only tensors which are symmetric, divergenceless, and a concomitant of the metric together with its first two derivatives, are the metric and the Einstein tensor.  \par As explained in the introduction, setting $ w_{\alpha\beta}=\frac{8\pi G}{c^{4}}\tilde{T}_{\alpha\beta}=v_{\alpha\beta}+\varPhi_{\alpha\beta} $ provides for the orthogonal decomposition of the matter energy-momentum tensor $\tilde{T}_{\alpha\beta} $. Since $T_{\alpha\beta} $ is uniquely determined up to a divergenceless tensor, the sum of possible divergencless non-Lovelock tensors in the Einstein equation can be dismissed. Hence, $v_{\alpha\beta}=\Lambda g_{\alpha\beta}+G_{\alpha\beta}$ where $\Lambda$ is the cosmological constant. Thus, \emph{the ODT and Lovelock's theorem generates the complete Einstein equation in one line}: $\Lambda g_{\alpha\beta}+G_{\alpha\beta}+\varPhi_{\alpha\beta}=\frac{8\pi G}{c^{4}}\tilde{T}_{\alpha\beta} $. It follows that MGR does not introduce any third or higher order derivative terms into the Einstein equation.
\par For clarity and completeness, some structural aspects of MGR \cite{9} are briefly discussed. The action functional $ S=S^{F}+S^{EH}+S^{G} $ consists of the action for all matter fields $S^{F}$, the Einstein-Hilbert action of general relativity $S^{EH}$ and the action for the energy-momentum of the gravitational field, $S^{G}=-a\int\Phi\sqrt{-g}d^{4}x$:
\begin{equation}\label{Se}
	\begin{split}
		S=\int L^{F}( A^{\beta},\nabla^{\alpha} A^{\beta},...,g^{\alpha\beta})\sqrt{-g}d^{4}x
		+a \int (R-2\Lambda)\sqrt{-g}d^{4}x\\-a\int \varPhi_{\alpha\beta} g^{\alpha\beta}\sqrt{-g} d^{4}x
	\end{split}
\end{equation}where $L^{F}  $ is the Lagrangian of the matter fields, and $ \Phi $ is the trace of $ \varPhi_{\alpha\beta} $ with respect to the inverse metric. The details of the variations of the action functional $S$ with respect to the variables $g^{\alpha\beta}$, $ u^{\nu}$ and $f $ are given in Appendix B.  With $a=\frac{c^{3}}{16\pi G}$, the variation of $S$ with respect to $g^{\alpha\beta}$ generates the modified Einstein equation
\begin{equation}\label{MEQ}
	G_{\alpha\beta}+{\Lambda} g_{\alpha\beta}+\varPhi_{\alpha\beta}=\frac{8\pi G}{c^{4}}\tilde{T}_{\alpha\beta}
\end{equation} and 
\begin{equation}\label{nuab}
	\nabla_{\alpha}(u^{\alpha}u^{\beta})=0.
\end{equation}
Using (\ref{nuab}), the integral of $ \Phi $ over all spacetime vanishes:
\begin{equation}\label{intPhi}
	\int \Phi \sqrt{-g}d^{4}x=0.
\end{equation}As discussed in Ref. \cite{9} for a closed Universe in the Friedmann-Lema\^{i}tre-Robertson-Walker (FLRW) metric, $ \Phi>-2\varPhi_{00} $ is gravitationally repulsive and describes dark energy. The opposite inequality describes a gravitationally attractive scenario. However, these results are independent of the parameter $ \kappa $ which describes an open, closed or flat Universe in the FLRW metric. Thus, (\ref{intPhi}) keeps a homogeneous and isotropic Universe in balance. Moreover, (\ref{intPhi}) applies to any Universe described by a Lorentzian metric; it cannot rip apart or contract to oblivion. In general, dark energy in a Lorentzian spacetime is described by the condition 
\begin{equation}\label{de}
	\Phi>-2\varPhi_{00}. 
\end{equation} The energy-momentum tensor of the gravitational field exhibits important cosmological properties. \par  Although the integral $ S^{G} $ vanishes, it admits the variation:\newline $ \delta S^{G}=a\int\varPhi_{\alpha\beta}\delta g^{\alpha\beta}\sqrt{-g}d^{4}x $.
It then follows from (\ref{intPhi}), that the action
\begin{equation}\label{key}
	S^{EHG}=\frac{c^{3}}{16\pi G}\int (R-\Phi)\sqrt{-g}d^{4}x
\end{equation} generates the modified Einstein equation with no cosmological constant. If $ \Phi$ is set to the constant
2$\Lambda $, the Einstein equation with a cosmological constant is obtained accordingly, which contradicts (\ref{intPhi}). Thus, $ \Phi $ dynamically replaces the cosmological constant in the action functional. Einstein's equation in MGR 
\begin{equation}\label{ME}
	G_{\alpha\beta}=\frac{8\pi G}{c^{4}}T_{\alpha\beta}
\end{equation} with the total energy-momentum tensor given by
\begin{equation}\label{Tab}
	T_{\alpha\beta}=\tilde{T}_{\alpha\beta}-\frac{c^{4}}{8\pi G}\varPhi_{\alpha\beta}
\end{equation} is identical in form to that of GR, but is now complete by having a tensor that represents the energy-momentum of the gravitational field. The local conservation law, $ \nabla^{\alpha}T_{\alpha\beta}=0 $, follows from the diffeomorphic invariance of MGR. In the vacuum, it follows from (\ref{Tab}) that $ \nabla^{\alpha}\varPhi_{\alpha\beta}=0 $ and the contracted Bianchi identity is satisfied from (\ref{EF}) as required.\par The action functionals $S^{F}$ and $S^{EH}$ do not depend on $X^{\mu}$. Variation of the action functional $S^{G}$ in (\ref{Se}) with respect to the dynamic variable $ u^{\nu} $ gives
\begin{equation}\label{unu}
	u_{\nu}=\frac{\partial_{\nu}f}{\Phi}, 
\end{equation} where $ f\neq0 $ is the magnitude of $ X^{\beta} $ dual to $ X_{\nu}=fu_{\nu} $ by $ g^{\beta\nu} $. These are the covectors that describe dark matter as discussed in subsection 4.7. \par Since the magnitude of $ X^{\beta} $ is independent of $ u^{\beta} $, variation of $S^{G}$ in (\ref{Se}) with respect to $ f $ generates 
\begin{equation}\label{nabu}
	\nabla_{\alpha}u^{\alpha}=0, 
\end{equation}
which is also obtained from (\ref{nuab}) in an affine parameterization. It follows from (\ref{nabu}) and (\ref{unu}) that $ f $ satifies the inhomogeneous wave equation 
\begin{equation}\label{wf}
	\square f=u^{\alpha}\partial_{\alpha}\Phi=\pounds_{u}\Phi
\end{equation}where $ \square f:=\nabla_{\alpha}\nabla^{\alpha}f $. Since the unit vectors $ u^{\alpha} $ in a Lorentzian spacetime satisfy 
\begin{equation}\label{uu1}
	u^{\alpha}u_{\alpha}=-1
\end{equation}and (\ref{nabu}), there are two independent components of $ u^{\alpha}=\frac{\partial^{\alpha}f}{\Phi} $ generated from (\ref{unu}). It follows from (\ref{wf}) and $ X_{\alpha}X^{\alpha}=-f^{2} $ that $ \bm{X} $ has two independent components, which is consistent with $ u^{\alpha} $ being collinear with $ X^{\alpha} $.
\par  $\varPhi_{\alpha\beta}  $ vanishes if and only if $ X^{\mu} $ is a Killing vector: if $ X^{\mu} $ is a Killing vector, $\nabla_{\alpha}X_{\beta}+\nabla_{\alpha}X_{\alpha}=0$ from which $\varPhi_{\alpha\beta}=-\frac{u_{\alpha }X^{\lambda}\nabla_{\lambda}X_{\beta}}{f}-\frac{u_{\beta }X^{\lambda}\nabla_{\lambda}X_{\alpha}}{f}=0$ in an affine parameterization; conversely, if the tensor $\varPhi_{\alpha\beta}=0$, it vanishes in all coordinate systems including that defined by $u^{\lambda}=(1,0,0,0)$ and $u_{\lambda}=(-1,0,0,0)$. Then $0=\varPhi_{\alpha\beta}=\frac{1}{2}(\nabla_{\alpha}X_{\beta}+\nabla_{\beta}X_{\alpha})-(\nabla_{\beta}X_{\alpha}+\nabla_{\alpha
}X_{\beta})$ and $X^{\beta}$ is a Killing vector. However, in general, there are no Killing vector fields unless a particular symmetry is involved.

\section{The extended Schwarzschild solution and the energy density of the gravitational field}
In a region of spacetime where there is no matter of any kind whatsoever, the matter
Lagrangian in (\ref{Se}) vanishes. $ \tilde{T}_{\alpha\beta}=0 $ and the field equations in the vacuum are:
\begin{equation}\label{EF}
	G_{\alpha\beta}+\varPhi_{\alpha\beta}=0.
\end{equation}  Spherically symmetric solutions to these nonlinear equations are now investigated with a metric of the form
\begin{equation}\label{g}
	ds^{2}=-e^\nu c^{2}dt^{2}+e^{\lambda}dr^{2}+r^{2}(d\theta^{2}+sin^{2}\theta d\varphi^{2})
\end{equation} where $ \nu $ and $ \lambda $ are functions of $r$ and $t$. 
Since the metric is spherically symmetric, there exist three Killing vectors associated with the spherical symmetry of a three-dimensional spatial rotation. $ \varPhi_{\alpha\beta} $ vanishes when $ X^{\beta} $ is a Killing vector and (\ref{EF}) reduces to $ R_{\alpha\beta}=0 $. It follows that $ \nu=-\lambda $ holds in MGR as a direct result of the spherical symmetry invoked to solve (\ref{EF}).
\par  Static solutions to (\ref{EF}) are now sought, which requires $ \partial_{0}X_{\alpha}=0 $ and from the metric $ \partial_{0}\lambda=0,\enspace \partial_{0}\nu=0$. It follows that $ u_{\alpha}\partial_{0}f+f\partial_{0}u_{\alpha}=0 $ and (\ref{uu1}) holds. 
\par With $X_{3}=0$ and $\mu:=(1+2u_{0}u^{0})$, the components of $ \varPhi_{\alpha\beta} $ to be considered are then: 
\begin{equation}\label{Phi00}
	\varPhi_{00}=\frac{\mu}{2}e^{-2\lambda}\lambda^{\prime} X_{1}
\end{equation}
\begin{equation}\label{key}
	\varPhi_{11}=(1+2u_{1}u^{1} )({X_{1}}^{\prime}-\frac{1}{2}\lambda^{\prime}X_{1}),
\end{equation} 
\begin{equation}\label{key}
	\varPhi_{22}=(1+2u_{2}u^{2} )(\partial_{2}X_{2}+re^{-\lambda} X_{1}),  
\end{equation} 
\begin{equation}\label{key}
	\varPhi_{33}=r \sin^{2}\theta e^{-\lambda}X_{1}+\sin\theta \cos\theta X_{2},
\end{equation} the Ricci scalar, which from (\ref{EF}) equals $ \Phi $, is
\begin{equation}\label{R}
	\begin{split}
		R=e^{-\lambda}(\lambda^{\prime\prime}-{\lambda^{\prime}}^{2}+\frac{4}{r}\lambda^{\prime}-\frac{2}{r^{2}})+\frac{2}{r^{2}},
	\end{split}
\end{equation}
and the corresponding components of the Einstein tensor are:
\begin{equation}\label{key}
	\begin{split}
		G_{00}=\frac{1}{r^{2}}e^{-2\lambda}(r\lambda^{\prime}-1)+\frac{e^{-\lambda}}{r^{2}},
	\end{split}
\end{equation}
\begin{equation}\label{key}
	\begin{split}
		G_{11}=\frac{1}{r^{2}}(1-r\lambda^{\prime})-\frac{e^{\lambda}}{r^{2}},
	\end{split}
\end{equation}
\begin{equation}\label{key}
	\begin{split}
		G_{22}=\frac{r^{2}e^{-\lambda}}{2}(-\lambda^{\prime\prime}+{\lambda^{\prime}}^{2}-\frac{2\lambda^{\prime}}{r}),
	\end{split}
\end{equation}
\begin{equation}\label{key}
	\begin{split}
		G_{33}={\sin\theta}^2[\frac{r^{2}e^{-\lambda}}{2}(-\lambda^{\prime \prime}+{\lambda^{\prime}}^{2}-\frac{2\lambda^{\prime}}{r})]
	\end{split}
\end{equation} where the prime denotes $ \partial_{1} $.  \par Since $ e^{2\lambda}(\varPhi_{00}+G_{00})+\varPhi_{11}+G_{11}=0 $ from (\ref{EF}),
\begin{equation}\label{phiG0011}
	\begin{split}
		\mu\frac{\lambda^{\prime}}{2}X_{1}+(X_{1}^{\prime}-\frac{\lambda^{\prime}}{2}X_{1})(1+2u_{1}u^{1})=0
	\end{split}
\end{equation} where $\mu:=(1+2u_{0}u^{0})$.
\par  $G_{22}+\varPhi_{22}=0 $ gives \begin{equation}\label{phiG22}
	\begin{split}
		-\lambda^{\prime\prime}+\lambda^{\prime2}-\frac{2}{r}\lambda^{\prime}+\frac{2e^{\lambda}}{r^{2}}(\partial_{2}X_{2}+re^{-\lambda}X_{1})(1+2u_{2}u^{2})=0
	\end{split}
\end{equation}

and $G_{33}+\varPhi_{33}=0 $ in the interval $ 0<\theta<\pi $ yields
\begin{equation}\label{phiG33}
	\begin{split}
		-\lambda^{\prime\prime}+\lambda^{\prime 2}-\frac{2}{r}\lambda^{\prime}+\frac{2e^{\lambda}}{r^{2}}(re^{-\lambda}X_{1}+X_{2}\cot\theta)=0.
	\end{split}
\end{equation} Subtracting (\ref{phiG22}) from (\ref{phiG33}) requires
\begin{equation}\label{C2}
	X_{2}\cot\theta -\partial_{2}X_{2}-2u_{2}u^{2}(\partial_{2}X_{2}+re^{-\lambda}X_{1})=0.
\end{equation} 
\par Using (\ref{uu1}), an equation involving the two independent components $ X_{1} $ and $ X_{2} $ can be obtained from (\ref{phiG0011}) and (\ref{C2}):
\begin{equation}\label{C}
	\begin{split}
		X_2\cot\theta-\partial_{2}X_{2}+\mu(1+\frac{\lambda^{\prime}X_{1}}{\lambda^{\prime}X_{1}-2X_{1}^{\prime}})(\partial_{2}X_{2}+re^{-\lambda}X_{1})=0.
	\end{split}
\end{equation}This is the equation from which power series expressions for $ X_{1} $ and $ X_{2} $ are sought, which can then be used in (\ref{phiG33}) to generate $ \lambda $. \par An expression for $ X_{1}$ of the form $ X_{1}=e^{\lambda}P $ is pursued where $ P $ is a polynomial in $ r $. The power series for $X_{1} $ is then assumed to be 
\begin{equation}\label{X11}
	X_{1}=e^{\lambda}(a_{0}+\frac{a_{1}}{r}+\frac{a_{2}}{r^{2}})
\end{equation} where $ a_{0} $, $ a_{1} $ and $ a_{2} $ are real arbitrary parameters. The power series for $ P $ terminates with the $\frac{a_{2}}{r^{2}}$ term, which ensures the gravitational energy density has the Newtonian $ \frac{1}{r^{4}} $ behaviour.
\par  Equation (\ref{C}), with $ X_{1} $ given by (\ref{X11}), becomes 
\begin{equation}\label{N}
	\partial_{2}X_{2}-m(r)X_{2}\cot\theta=n(r)
\end{equation}where $ n(r):=\frac{2\mu rPP^{\prime}}{\lambda^{\prime}P+2P^{\prime}(1-\mu)}$ and $m(r)=\frac{\lambda^{\prime}P+2P^{\prime}}{\lambda^{\prime}P+2P^{\prime}(1-\mu)}$, which has the solution 
\begin{equation}\label{key}
	\begin{split}
		X_{2}=\frac{n}{1-m}\sin\theta\;_{2}F_{1}(\frac{1}{2},\frac{1-m}{2};\frac{3-m}{2};\sin^{2}\theta)+c_{10}\sin^{m}(\theta).
	\end{split}
\end{equation}$_{2}F_{1}(\alpha,\beta;\gamma;z)  $ is the Gaussian hypergeometric function with $ \alpha=\frac{1}{2} $, $ \beta=-\frac{1-m}{2} $, $ \gamma=\frac{3-m}{2} $ and $ z=\sin^{2}\theta $. Setting the arbitrary constant $ c_{10} $ to zero and using the Euler transformation $_{2}F_{1}(\alpha,\beta;\gamma;z)=(1-z)^{-\alpha}\;  _{2}F_{1}(\alpha,\gamma-\beta;\gamma;\frac{z}{z-1})   $ for $0<\theta<\frac{\pi}{4}  $ yields
\begin{equation}\label{Fabc}
	\begin{split}
		X_{2}&=\frac{n}{1-m}\tan\theta\;_{2}F_{1}(\frac{1}{2},1;\frac{3-m}{2};-\tan^{2}\theta)\\
		&=-rP\tan\theta(1+\Sigma_{n=1}^{\infty}\frac{(\alpha)_{n}(\beta)_{n}(-\tan^{2}\theta)^{n}}{(\gamma)_{n}n!})\\
		&=-rP\tan\theta(1+\frac{1}{3-m}(-\tan^{2}\theta) +O(\tan^{4}\theta) )
	\end{split}
\end{equation}where $ \alpha=\frac{1}{2} $, $ \beta=1 $, $ \gamma=\frac{3-m}{2} $ and $ (\alpha)_{n}=\alpha(\alpha+1)...(\alpha+n-1)\enspace n>0 $ is the Pochhammer symbol. Using the first term of (\ref{Fabc}) as a solution for $ X_{2} $ in (\ref{phiG33}) gives the Schwarzschild solution. 
Expanding $\frac{1}{3-m}:=q(r)$ as a series of inverse powers of $r$ and taking  $\tan\theta<<1$ gives $X_{2}\simeq\tan\theta(b_{2}r+b_{0}+\frac{b_{1}}{r}+O(r^{-2}))$ where the $b_{j}$ are products or sums of products of $a_{j}$ and $q_{j}$ and are therefore independent of the $a_{j}$ in $P$. The second and higher terms of (\ref{Fabc}) consist of powers of $\tan^{2}\theta$ and can be neglected for small $ \tan\theta $. The radial dependence of the expansion of $ X_{2} $ is then then assumed to be of the form
$ b_{2}r+b_{0}+\frac{b_{1}}{r} $ where $b_{2}$, $b_{0} $ and $ b_1 $ are real arbitrary parameters. 
\par Assuming $ b_{2}\neq0$ and using (\ref{uu1})  with $X_{\alpha}=fu_{\alpha} $ where $ f\neq0 $ is the magnitude of $ X_{\alpha}$, it follows that $\frac{b_{2}^{2}\tan^{2}\theta}{f^{2}}\approx-1   $ for extremely large but finite $ r $, which is impossible. Thus, $ b_{2} $ must vanish and $ X_{2} $ does not blow-up as $ r $ becomes extremely large. The physically relevant approximate solution for $ X_{2} $ is then
\begin{equation}\label{X2}
	X_{2}=(b_{0}+\frac{b_{1}}{r})\tan\theta,\;0<\tan\theta<<1.
\end{equation} \par Equation $(\ref{phiG33})$ can now be written as 
\begin{equation}\label{phiG33a}
	\begin{split}
		-\lambda^{\prime\prime}+\lambda^{\prime 2}-\frac{2}{r}\lambda^{\prime}+\frac{2e^{\lambda}}{r^{2}}(a_{0}r+a_{1}+\frac{a_{2}}{r}+b_{0}+\frac{b_{1}}{r})=0.
	\end{split}
\end{equation}
By demanding (in hindsight) 
\begin{equation}\label{b1}
	b_{1}=-a_{2} 
\end{equation}the undesirable term with $ \frac{\ln r}{r} $ in the solution for $ \lambda $ can be avoided. Equation (\ref{phiG33a}) then simplifies to
\begin{equation}\label{main}
	\begin{split}
		-\lambda^{\prime\prime}+\lambda^{\prime 2}-\frac{2}{r}\lambda^{\prime}+\frac{2e^{\lambda}}{r^{2}}(a_{0}r-b)=0
	\end{split}
\end{equation}where
\begin{equation}\label{b}
	-b=a_{1}+b_{0},
\end{equation} which has the exact solution
\begin{equation}\label{wow}
	\begin{split}
		\lambda=-\ln(-a_{0}r+2b\ln r+\frac{c_{1}}{r}+c_{2}), \enspace 0<r<\infty
	\end{split}
\end{equation}where $c_{1}$ and $ c_{2}  $ are arbitrary parameters. Equation (\ref{wow}) represents the extended Schwarzschild solution. It is important to note that since the line element vectors are non-vanishing, $ r>0 $. Moreover, the extended Schwarzschild solution demands $ r $ to be finite; $ r $ can be as large as necessary to describe any physically reasonable Universe or part thereof, but it cannot extend to infinity.
\subsection{The energy density of the gravitational field} 
Einstein \cite{2} produced a non-tensorial object to represent the energy-momentum of the gravitational field. Subsequently, there have been many pseudo-tensors and other approaches introduced over the decades to describe gravitational energy-momentum in GR. However, pseudo-tensors generally suffer from not having well defined values in all reference frames. This led to the modern concept of quasi-local energy-momentum \cite{Chen} associated with a closed 2-surface. While quasi-locality is a far superior construct as compared to pseudo-tensors to describe gravitational energy-momentum, it cannot represent local gravitational energy-momentum because the equivalence principle forbids it. That issue is resolved with MGR because it contains a connection-independent tensor that locally represents the energy-momentum of the gravitational field; that tensor, which is missing in GR, replaces  pseudo-tensors and quasi-local energy-momentum.
\par  It is well known \cite{MM} that the local properties of the FLRW cosmology are identical to Newtonian gravitation. Equation (\ref{F2}) can be expressed as 
\begin{equation}\label{key}
	\frac{1}{2}\dot{a}^{2} -\frac{\frac{4\pi}{3}a^{3}\rho G}{a}+\frac{c^{2}a^{2}\varPhi_{00}}{6}=-\frac{\kappa c^{2}}{2},
\end{equation}
which represents: the kinetic energy of a particle with a unit mass moving with the expansion at a distance $ a $ relative to the Big Bang at $ a=0 $, plus the gravitational potential energy relative to a sphere of gravitating matter, plus the energy of the gravitational field ($ \varPhi_{00} $ has the dimensions of $ L^{-2} $), equals a constant where the constant $ \kappa $ is related to the curvature of spacetime in the FLRW metric. During free-fall, the curvature of spacetime does not change. In GR, $ \varPhi_{00} $ does not exist and the energy of a comoving particle in free-fall is conserved. Maintaining that result in MGR demands $ \varPhi_{00} $ to be invariant under free-fall; it must be connection-independent so that the energy of the gravitational field has the same value when expressed covariantly as it does during free-fall when the connection coefficients locally vanish. As defined in terms of the Lie derivative in (\ref{Phiab}), $ \varPhi_{\alpha\beta} $ is connection-independent so gravitational energy in the FLRW metric of MGR is invariant during free-fall. Moreover, that result is not limited to the FLRW metric. $ \varPhi_{\alpha\beta} $ pertains to any Lorentzian metric, so gravitational energy is invariant during free-fall in MGR for all Lorentzian metrics.
\par The energy density of the gravitational field in MGR follows from (\ref{Tab}):
\begin{equation}\label{W}
	W=-\frac{c^{4}}{8\pi G}\varPhi_{00}.
\end{equation} From $
\varPhi_{00}=\frac{\mu}{2}X_{1}e^{-2\lambda}\lambda^{\prime}$, (\ref{X11}) and (\ref{wow}), 
\begin{equation}\label{key}
	\varPhi_{00}=\frac{\mu}{2}(a_{0}-\frac{2b}{r}+\frac{c_{1}}{r^{2}})(a_{0}+\frac{a_{1}}{r}+\frac{a_{2}}{r^{2}}),
\end{equation}which must yield the $ \frac{1}{r^{4}} $ behaviour of the Newtonian gravitational energy density and include a term involving dark matter. By setting the coefficients of the $ \frac{1}{r}$ and $\frac{1}{r^{3}} $ terms to zero, we obtain 
\begin{equation}\label{a1}
 2b=a_{1} 	
\end{equation}
 and $ 2ba_{2}=c_{1}a_{1} $, respectively, from which
\begin{equation}\label{a2}
	a_{2}=c_{1}.
\end{equation} A contribution from dark matter must come from the $\frac{1}{r^{2}}$ term, which vanishes if $a_{1}^{2}=2a_{0}\mid c_{1}\mid$. Thus, to prevent that, we set 
\begin{equation}
	2a_{0}\mid c_{1}\mid=\upsilon a_{1}^{2} 
\end{equation}where $\upsilon>1$ is a parameter that relates dark matter $M_{DM}$ to ordinary matter $M$ by the ratio $M_{DM}=\upsilon M $. From (\ref{a1}) and (\ref{a2}),
\begin{equation}\label{b}
	b=\pm\sqrt{\frac{a_{0}\mid c_{1}\mid\upsilon}{2}}
\end{equation}where $ c_{1} $ is chosen to be the parameter
\begin{equation}\label{c1}
c_{1}=-\frac{2GM}{c^{2}}
\end{equation}from the Schwarzschild solution. Hence,
\begin{equation}\label{gen}
	\varPhi_{00}=\mu(\frac{1}{2}a_{0}^{2}+\frac{4b^{2}(\upsilon-1)}{r^{2}}+\frac{c_{1}^{2}}{2r^{4}})
\end{equation} \par It is often the case that the amount of dark matter relative to ordinary matter is not known, and $\upsilon$ cannot be determined. Nevertheless, the line element vectors exist at all points in spacetime, and dark matter exists everywhere. When $\upsilon$ is not known, the mass of a system still contains an observationally unknown (small) amount of dark matter, and $M$ is considered to be the gravitating mass.
\par From (\ref{W}), it follows that the energy density of the static gravitational field is
\begin{equation}\label{W0}
	W=-\frac{\mu c^{4}}{16\pi G}(a_0^{2}+\frac{8b^{2}(\upsilon-1)}{r^{2}}+\frac{4G^{2}M^{2}}{r^{4}})\;\;\upsilon>1.
\end{equation}
If this static spherically symmetric system satisfies $u_{0}u^{0}=u_{i}u^{i}\;\;i=1,2,3 $\:\:no sum on i, then $u_{0}u^{0}=-\frac{1}{4}$ and $\mu=\frac{1}{2}$. The last term equals the Newtonian gravitational energy density and that calculated in GR from the weak field approximation: $-\frac{GM^{2}}{8\pi r^{4}}\frac{J}{m^{3}} $, which has the value -$2.25\times10^{4}\frac{J}{m^{3}}$ on the Earth relative to the Sun. Using the value of $a_{0}=5.737\times10^{-41}$ for the calculation of the dark matter of the Sun from (\ref{a0zeta}), the first term has the value $-7.96\times10^{-39}\frac{J}{m^{3}}$, which is generally too small to be measurable. The second term is the dark matter term, which has the value $-7.57\times10^{-16}\frac{J}{m^{3}}  $ with $\upsilon=5$.
\par In an orthonormal basis $(\textbf{e}_{\alpha})$ for $g_{\alpha\beta}$ such that $\textbf{e}_{0}=\textbf{u}$,  $u^{\alpha}=\delta^{\alpha}_{0} $ and $u_{0}=-1$, which demands $\mu=-1$. Gravity gravitates and we see that the gravitational energy density is positive and twice the Newtonian gravitational energy density, and twice that calculated in GR from the weak field approximation. This is not surprising because in GR, it is calculated from the Newtonian potential $\phi  $ in terms of $ h_{\alpha\beta} $, a minute change in the metric relative to the flat spacetime Minkowski metric in the linearized field equations \cite{LL}. The perturbation expansion of the field equations in terms of $ h_{\alpha\beta} $ contains an infinite number of minute terms involving $ h^{2},h^{3},...h\partial h,...h\partial\partial h... $ to all possible powers of $ h$ and its first two derivatives. These truncated self-interactions of the perturbed field apparently contribute the same amount of gravitational energy to $W$. However, there is no way of knowing that in GR because it has no tensor that explicitly represents the energy-momentum of the gravitational field. That $W$ is positive in a comoving frame is necessary to describe the local energy density of a gravitational wave: $W=(\frac{GM^{2}}{4\pi r^{4}}+\frac{b^{}c^{4}(\upsilon-1)}{2\pi Gr^{2}}+\frac{c^{4}a_0^{2}}{16\pi G })\frac{J}{m^{3}}$. 
\section{Dark matter}
The existence of dark or invisible matter is based on the assumption that GR is a complete theory. However, it lacks a tensor that describes the energy-momentum of the gravitational field and is not complete. GR cannot successfully describe the flat rotation curves of most galaxies without introducing additional mass that appears to not interact with the electromagnetic field. Whereas, MGR contains the gravitational energy-momentum tensor $ \varPhi_{\alpha\beta} $ that is missing in GR. That tensor is constructed from the line element field vectors. The ability of the line element field to describe dark matter is now investigated.
\subsection{The dark parameters $ a_{0} $ and $ b $ of the line element field}
Since every point in a Lorentzian spacetime admits a pair of equal and opposite line element vectors, both $ a_{0} $ and $ b $ have two signs, as do all other parameters in (\ref{X11}) and (\ref{X2}). The line element field consists of regular vectors so $ a_{0} $ and $ b $ never vanish. The parameter $ c_{2} $ in the  extended Schwarzschild solution (\ref{wow}) is chosen to be unity in accordance with the Schwarzschild solution of GR. There is one independent parameter, $ a_{0} $, once $ c_{1} $ is fixed; all other parameters can be determined in terms of $ a_{0} $ and $ c_{1}=-\frac{2GM}{c^{2}} $ where $ M $ represents the ordinary mass of the system if $\upsilon$ is known; otherwise $M$ is the gravitating mass. Since $ b $ is determined from $a_{0}  $ and $ c_{1} $, the free parameter $ a_{0} $ is the fundamental ``dark parameter" that is involved in the description of both dark energy and dark matter. \par The mass of galaxies and larger structures involves substantial invisible mass, which must be described with the dark parameter $ a_{0} $. Since the gravitating mass of larger structures is proportional to that of small structures, a scaling relation for $ a_{0} $ must be established. It must scale from the smallest to the largest distances in the Universe. The Planck length $l_{p}=\sqrt{\frac{G\hbar}{c^{3}}}=1.616\times10^{-35}  $m is considered to be the smallest physically relevant length and the size of galaxy clusters is on the higher end of measurable lengths. In the solar system, the measurement of any physical entity is in an environment with a gravitating mass $ M\geq M_{\odot} $ where $ M_{\odot} $ is the mass of the Sun. For large structures in the Universe, the ratio $ \frac{M}{M_{\odot}} $ is a natural scale to use that represents the gravitation of those structures relative to that of the Sun. The dimension of $ a_{0} $ is $ L^{-1} $ so it is defined to be
\begin{equation}\label{a0zeta}
	 a_{0}:=\frac{\zeta Ml_{p}}{M_{\odot}}>0
\end{equation}where $ \zeta $ is a parameter of dimension $ L^{-2} $. Since $ a_{0} $ is the fundamental dark parameter, it must be determined from a known invisible mass halo such as that for galaxy NGC 3198. It has an invisible mass of $ 8.6\times10^{11}M_{\odot} $ and a rotation curve velocity of 150 km/s as discussed in Section 4.5. From (\ref{a0zeta}) and (\ref{TF}), it follows that 
\begin{equation}\label{zeta}
	\zeta=3.55\times10^{-6}\text{m}^{-2}.
\end{equation}
It will be shown that this value of $ \zeta$ leads to reasonable corrections to the GR results in both the solar system and galactic clusters from terms involving $ a_{0}$ and $ b $. The scaling relation (\ref{a0zeta}) generates an estimate for the rotational speed $v$ in (\ref{TF}) when it is not known from observations. It is interesting that the parameter $ b $ scales naturally with $ \sqrt{\frac{GM}{c^{2}}} $.\par  Since (\ref{b}) provides a choice of signs for the parameter $ b $ independent of the sign of the other parameters that define its value, that can lead to unphysical results from the wrong choice of sign. Calculations such as the extended Newtonian force that involves the gradient of the weak field potential $ \phi $ and therefore $e^{-\lambda}  $ introduce a negative sign. In the present epoch, it is apparent that $ b $ must be positive for those calculations. Otherwise, $ b $ is chosen to be negative when the gradient of $ e^{-\lambda} $ is not involved, such as calculations of the Shapiro time delay, gravitational lensing, and the pericentre advance of celestial objects.
\par The Tully-Fisher relation is derived in the next subsection.
\subsection{The radial gravitational force and galactic rotation curves in MGR}
The radial gravitational force on an object of mass m can now be calculated from (\ref{wow}). Using the weak-field relationship of the Newtonian potential $ \phi $ to $ g_{00} $, $	\phi=\frac{c^{2}}{2}(e^{\nu}-1)$, the modified Newtonian force is
\begin{equation}\label{Newt}
	F_{r}=-\frac{GMm}{r^{2}}-\frac{\mid b\mid mc^{2}}{r}+\frac{a_{0}mc^{2}}{2}
\end{equation} where M represents the ordinary mass of the galaxy.\par The correction terms to the Newtonian force come from the non-zero components of the line element field in the energy-momentum tensor $ \varPhi_{\alpha\beta}$. The second term is gravitationally attractive and represents the correction from invisible mass. That term provides the additional gravitational attraction that is missing in GR. Since the parameter $ b $ has two signs, its absolute value is invoked to ensure that term is negative. \par The third term is positive and repulsive with $ a_{0}>0 $ as defined. This describes the repulsive dark energy force in the present accelerating epoch. However, a decelerating epoch has been observed by Riess et al. \cite{Riess}. They used the Hubble telescope to provide the first conclusive evidence for cosmic deceleration that preceded the current epoch of cosmic acceleration.\par Assuming a circular orbit about a point mass, it follows that the orbital velocity of a star rotating in the galaxy satisfies
\begin{equation}\label{V}
	v^{2}(r)=\frac{GM(r)}{r}+\mid b\mid c^{2}-\frac{a_{0}c^{2}}{2}r
\end{equation} where $ M(r) $ is the ordinary mass of the galaxy interior to a fixed radius of $ r $. Equation (\ref{V}) demands an upper limit to r describing a large but finite galaxy.  
\par Because $ a_{0}\neq0$ and scales with the ordinary mass, it is possible for the Newtonian force to balance the dark energy force in a certain region of the galaxy:
\begin{equation}\label{Fd}
		\frac{GM(r)}{r}=\frac{a_{0}c^{2}}{2}r.
\end{equation}
 Then,
\begin{equation}\label{v2b}
	v^{2}=\mid b\mid c^{2}
\end{equation} describes a specific class of galaxies with a flat (slope zero) orbital rotation curve. From (\ref{b}) and (\ref{v2b}), we obtain the exact power four Tully-Fisher relation
\begin{equation}\label{TF}
	v^{4}=\upsilon GM(r)c^{2} a_{0}=GM(r)_{DM}c^{2} a_{0},
\end{equation} which shows that the power four pertains to a pure dark matter region of a galaxy.
\par Using the MOND interpolation function \cite{Milg} $ \mu(\frac{a}{A_{0}})=\frac{a}{A_{0}} $ for $ a<<A_{0} $, the Tully-Fisher relation in MOND is evident by setting 
\begin{equation}\label{mond}
	c^{2}\mid a_{0}\mid:=A_{0}.	
\end{equation} The fundamental acceleration in MOND has the value $ A_{0}\approx1.2\times10^{-10}\,m\,s^{-2}$, which is taken as a fundamental constant. However, as observed in Ref. \cite{MONDA0}, a good MOND fit for galaxy DDO 154 is the small value of $A_{0} = 0.68\times10^{-10}\pm 0.02$, nearly half the standard value. It is apparent that scaling in MOND is important, but that is problematic for it since $ A_{0} $ is taken as a universal constant independent of any galactic property. Whereas $ a_{0} $ is not a fixed parameter in MGR; it is determined from the scaled ordinary mass in a particular region of spacetime according to (\ref{a0zeta}). 
\subsection{Modeling galactic rotation curves with MGR}
Non-visible matter in and around spiral galaxies, distributed differently from stars and gas, is well fixed from optical and 21 cm rotation curves that do not show the expected Keplerian fall-off at large radii but remain increasing, flat or only slightly decreasing over their entire observed range \cite{Sal}. It is clear that (\ref{V}) is a simple but interesting equation from which to model the rotation curves of any radially symmetric galaxy. The parameter $ a_{0} $ is calculated from (\ref{a0zeta}) and (\ref{zeta}); the parameter $ b $ is determined from (\ref{b}). If the rotation curve is flat, there will be a cut-off radius $r_{c}$ determined from observational data above which the visible mass does not significantly contribute to the invisible mass halo, and there is a radius $r_{v}$ that describes the observable outer limit of the dark matter halo. Then (\ref{v2b}) applies in that region and the gravitating dark matter mass responsible for the flat rotation curve at radius $ r $ obtained from (\ref{b}) and (\ref{Fd}) is
\begin{equation}\label{M}
	M(r)_{DM}=\frac{\mid b\mid c^{2}r}{\sqrt{2}G}\;\;\; r_{c}\leq r\leq r_{v}.
\end{equation}The dark matter parameter $ b $ is fundamental to this expression. To obtain an estimate for the mass of the dark matter halo in a galaxy, the mass calculated from (\ref{M}) up to $r_{c}$, which contains mainly baryonic mass, is subtracted from the mass calculated from (\ref{M}) at  $r_{v}$, typically taken as the virial radius.\par Equation (\ref{V}) can be simply expressed as
\begin{equation}\label{Dr}
	v^{2}=\mid b\mid c^{2}+r\Delta (r) 	
\end{equation} where $ \Delta(r):=\frac{GM(r)}{r^{2}}-\frac{a_{0}c^{2}}{2} $ is a function of $ r $ that represents the difference between the Newtonian and dark energy forces. In a region of the galaxy $ r_{c}\leq r\leq r_{v}$ where these forces are slightly different, (\ref{Dr}) describes the linear rise or fall of the rotation curve.  
\subsection{Baryonic galaxies} 
In the far finite extent of the galaxy, the sum of the terms in (\ref{V}) differs from the $ v^{2}\sim\frac{1}{r} $ Keplerian dependence typically assumed in most models. However, if both correction terms from the line element field are small for a particular galaxy, it will exhibit typical Keplerian behaviour, which is rarely true but has been observed for galaxy NGC 1052-DF2. As stated in Ref. \cite{vanD}, it is extremely deficient in dark matter, and a good candidate for a ``baryonic galaxy" with no dark matter at all. Moreover, the observed circular speed profile of the ultra-diffuse galaxy (UDG) AGC 114905 was explained in Ref. \cite{Pina} almost entirely  with the contribution of the baryons alone, with little room for dark matter within the observed outermost radius of $ R\sim 10 $\,kpc. 
\par However, there is \emph{always} dark matter involved in the description of any galactic rotation curve because the line element vector field exists at every point in the spacetime, and $ a_{0} $ and $ b $ never vanish. Using the baryonic mass of 1.4$ \times10^{9}M\odot $ in Ref. \cite{Pina}, $ a_{0}=8.03\times10^{-32}\,m^{-1} $ and $ b=-4.46\times10^{-10} $ with a small value of $\upsilon=1.2$. The Newtonian term in (\ref{V}) dominates the dark terms by a factor of $ \approx $15 at the observed outermost radius of 10\,kpc: $ v^{2}=(6.02\times10^{8}+4.01\times10^{7}-1.11\times10^{6})\,m^{2}\,s^{-2} $. \par The importance of not understimating dark matter in galaxies like AGC 114905 was emphasized in Ref. \cite{Sell} with regard to the stability of the galaxies. They used a NFW profile describing dark matter with a mass of 1.40$ \times10^{9}\,M\odot$ out to 10\,kpc and a total baryonic mass of 1.14$ \times10^{9}\,M\odot $ interior to that radius, so $\upsilon=1.27$ in that region.The Newtonian term in (\ref{V}) is $4.90\times10^{8}\,m^{2}\,s^{-2}$. It dominates the dark terms again by $ \approx $15. While UDG galaxies may appear to be mainly baryonic with a Keplerian behaviour, dark matter and dark energy are always present and cannot be dismissed. 
\subsection{Galaxy NGC 3198}
It is interesting to apply (\ref{Dr}) with $ \Delta=0 $ to galaxy NGC 3198, which is well known for its challenging invisible mass halo. This galaxy exhibits a flat rotation curve with a circular velocity of $v=150$ km/s $\pm 3\% $ over a distance of approximately 31\,kpc from 17\,kpc to 48\,kpc. The virial radius $ r_{vir} $ was modeled \cite{Kar} with the NFW dark matter profile to be 248.9\,kpc with a corresponding virial mass of $M_{vir}=8.9\times10^{11}\,M_{\odot}$. This includes a visible disk mass $ M_{disk}=2.63\times10^{10}\,M_{\odot} $ calculated from $ M_{disk}=1.1\varUpsilon^{3.6}_{*}v^{2}_{d}R_{1}/G $ with $ \varUpsilon^{3.6}_{*}=0.79 $; $ R_{1}=48 $\,kpc is the outermost disk radius and $ v_d=52.2 $ from Table 1 of Ref. \cite{Kar} . The mass of the dark matter halo is then $8.6\times10^{11}\,M_{\odot}$.\par 
From (\ref{M}) and (\ref{v2b}) with $ v=150\,\text{km}\,\text{s}^{-1}$ and $r=r_{vir} $, $ M(r)=9.21\times10^{11}\,M_{\odot} $. As stated in Ref. \cite{Kar}, the dark matter halo extending from 17\,kpc is virtually free from the uncertainty in the actual value of the visible disk mass. The contribution to $M(r)$ up to 17\,kpc is $ 6.28\times10^{10}\,M_{\odot} $ so the MGR estimate for the dark matter halo is $8.6\times10^{11}\,M_{\odot}$. This result is in perfect agreement with that of Karukes et al. \cite{Kar} under the assumption that the rotational velocity is relatively constant out to the virial radius of 249\,kpc. 
\subsection{The Bullet Cluster}
The Bullet Cluster 1E0657-56 represents the collision of two galaxy clusters. Each cluster consists of galaxies and intracluster gas, with the gas comprising the majority of the ordinary matter. Although the individual galaxies within the clusters essentially passed right through one another, the intracluster gas within each cluster collided. The ram pressure heated and ionized the gas, which created a plasma that produced observable X-rays. Using GR, the gravitational lensing of the Bullet Cluster indicated the majority of the mass is alongside and local to the galaxies, not with the plasma. This observation \cite{Clowe} suggests there exists additional invisible mass that is essentially collisionless---dark matter.  \par GR describes the gravity associated with the plasma and galaxies from the Einstein equation. However, it cannot explain the dark matter in the Bullet Cluster. Whereas MGR includes gravitational energy-momentum in the Einstein equation with the tensor $ \varPhi_{\alpha\beta} $. It affects the gravitational lensing of the Bullet Cluster calculated from its gravitating mass. Normally before a collision, the dark matter appears around the ordinary matter as a spherical halo, as with galaxy NGC 3198. However, the collision of the intracluster gas in the Bullet Cluster causes a displacement of the ordinary matter from the dark matter and the lensing shows the spherical dark matter halos alongside the plasma instead of around it.\par The gravitating masses for the main and Bullet cluster, $ 2.5\times10^{14}\,M_{\odot} $ and $ 2.0\times10^{14}\,M_{\odot} $ respectively, were obtained in Ref. \cite{Para} as a consequence of the virial theorem. The virial mass $ M_{vir}$ is related to the virial radius $ r_{200} $ by $ M_{vir}=\frac{4\pi}{3}r_{200}^{3}200\varrho_{c} $ where $\varrho_{c}  $ is the critical density $9.3\times10^{-27}\frac{kg}{m^{3}} $. Given the virial mass, the virial radius can be calculated and used as the impact parameter in (\ref{lens}). The gravitational lensing in MGR for the main cluster is calculated to be 67.04 arcsec. The GR term contributes only 7.65 arcsec to that result because GR cannot describe dark matter. The Bullet cluster lenses 50.71 arcsec with a contribution of 6.58 arcsec from the GR term. Thus, ordinary matter accounts for 11.4\% and 13.0\% of the total gravitating mass of the main and Bullet clusters, respectively, as compared to 15\% for each cluster according to the $ \Lambda $CDM model. 
\subsection{The geometrical and particle descriptions of dark matter} To this point in the discussion of MGR, dark matter is described geometrically in terms of the line element vector $ X^{\beta} $ that is dual to the covector $ X_{\alpha} $  by $ g^{\alpha\beta} $. The line element field is a geometrical entity, which is defined as a pair of equal and opposite regular vectors $ (\bm{X},-\bm{X}) $ at each point on the Lorentzian manifold. One of the pair of vectors is an integral part of the Einstein equation, which contains $ \varPhi_{\alpha\beta}=\frac{1}{2}\pounds_{X}g_{\alpha\beta}+\pounds_{X}(u_{\alpha} u_{\beta}). $\par  That geometrical description seems at odds with the typical explanation of dark matter as an undiscovered invisible particle with an unknown spin. Scalar, fermion, and vector boson dark matter coupled to gravity has been widely studied in the literature, although non-minimally coupled vector boson dark matter has been investigated the least \cite{Bar} (and references therein).\par However, there is no tension between both descriptions of dark matter.
First, it is important to prove that a spin-1 vector boson and a pair of spin-1/2 fermions can be described by pure wave equations of the Klein-Gordon (KG) type. That follows from the discussion in Ref. \cite{NQ} of the asymmetric wave equation for the vector field $X^{\beta}$
\begin{equation}\label{asymKG}
	\nabla_{\alpha}\nabla^{\alpha}X^{\beta}=k^{2}X^{\beta} 
\end{equation}where $ k=\frac{m_{0}c}{\bar{h}} $ and $ m_{0} $ is the rest mass attributed to a particular spin-1 particle. The Proca equation $ \nabla_{\alpha}K^{\alpha\beta}=k^{2}X^{\beta} $ with the Lorentz constraint $ \nabla_{\alpha}X^{\alpha}=0 $ are traditionally used to describe a spin-1 particle in curved spacetime. However, the Proca equation is not a pure wave equation because covariant derivatives do not commute in curved spacetime. That problem is resolved by symmetrizing (\ref{asymKG}) into its equivalent 
\begin{equation}\label{key}
	\nabla^{\alpha}(\tilde{\Psi}_{\alpha\beta}+K_{\alpha\beta})=2k^{2}X_{\beta}
\end{equation}where $\tilde{\Psi}_{\alpha\beta}=\nabla_{\alpha}X_{\beta}+\nabla_{\beta}X_{\alpha}=\pounds_{X}g_{\alpha\beta}  $ and $ K_{\alpha\beta}=\nabla_{\alpha}X_{\beta}-\nabla_{\beta}X_{\alpha} $. $\nabla_{\alpha}X^{\alpha}=(\Phi-2u^{\alpha}u^{\beta}\nabla_{\alpha}X_{\beta})=-\Phi\neq0$ replaces the Lorentz constraint, and  $\tilde{\Psi}_{\alpha\beta}  $ is divergenceless if and only if $k=0$ so that Maxwell's equation holds in curved spacetime. Thus, (\ref{asymKG}) describes a spin-1 vector boson with a pure KG wave equation. Since a spin-1 vector is equivalent to an outer product involving a spin-1/2 Dirac spinor $ \Psi $ and its Hermitian conjugate $ \Psi^{\dagger} $ by the relation $ X^{\beta}=\Psi^{\dagger}\gamma^{0}\gamma^{\beta}\Psi $ where $ \gamma^{\beta} $ are the Dirac gamma matrices, the KG wave equation (\ref{asymKG}) holds for a spin-1 vector boson and a pair of spin-1/2 fermions.\par 
Second, particles that are described by regular vector fields $X^{\beta}$ obey the spin-1 KG wave equation in curved spacetime
\begin{equation}\label{KG1}
	\nabla_{\alpha}\nabla^{\alpha}u^{\beta}=k^{2}u^{\beta} 
\end{equation} obtained from the action in Appendix C: $S^{1}=-\frac{1}{2}\int[\nabla_{\alpha}X_{\beta}\nabla^{\alpha}X^{\beta}+k^{2}X_{\beta}X^{\beta}]\sqrt{-g}d^{4}x $ where $ u^{\beta} $ is collinear to an arbitrary regular vector $X^{\beta}$ whose magnitude satisfies $\square f=0$. The geometry of spacetime and the quantum nature of matter are linked together by the unit line element covectors that belong to both (\ref{gab}) and the dual of (\ref{KG1}), $ \square u_{\alpha}=k^{2}u_{\alpha} $. 
\par This result applies to both dark matter and ordinary matter because $X^{\beta}$ is either a regular vector that is not restricted to the subspace of the line element field, or it is one the pair of regular vectors in the line element field $(X^{\beta},-X^{\beta})$. $\varPhi_{\alpha\beta}$ introduces the regular line element vector $X^{\beta}$ into MGR, which together with its collinear unit vector $u^{\beta}$, are taken to describe dark matter. Since the unit vector that is defined by (\ref{unu}) is generated from line element vectors and their covariant derivatives in Appendix B, (\ref{unu}) pertains solely to dark matter. Then given $ \square u^{\beta}=k^{2}u^{\beta}$, $\square X^{\beta}=k^{2}X^{\beta}$ if and only if $\square f=0$ and $\nabla_{\alpha}f=\Phi u_{\alpha}$. The magnitude of vectors describing dark matter is endowed with a homogeneous wave characteristic but is constrained by the vanishing of (\ref{wf}): $\pounds_{u}\Phi=0  $. \par If $X_{\mu}$ is a regular covector that is not restricted to the subspace of the line element field, variation of the action $S^{1}$ with respect to it and its independent covariant derivative $\nabla_{\nu}X_{\mu}$ yields $ \square X^{\beta}=k^{2}X^{\beta}$, which holds for ordinary matter. Thus, the variation with respect to the magnitude of $X^{\beta}$ in Appendix C generates $\square u^{\beta}=k^{2}u^{\beta} $ for both dark matter and ordinary matter provided the condition $ \square f=0 $ holds. However, for ordinary matter, $ \square X^{\beta}=k^{2}X^{\beta}$ does not follow directly from $ \square u^{\beta}=k^{2}u^{\beta}$, and conversely, because  the unit vector that is defined by (\ref{unu}) pertains solely to dark matter. 
\par Both ordinary matter and dark matter can be related to a given unit vector. Thus, $X_{o}^{\beta}=f_{o}u^{\beta}$ pertains to ordinary matter and $X^{\beta}=fu^{\beta}$ represents dark matter. $f$ is independent of $f_{0}$, and both $\square f$ and $\square f_{o}$ vanish.
\par Lagrangians for interactions between dark matter and ordinary matter particles and the self-interaction of dark matter particles are now investigated. The Lagrangian in the action $ S^{1} $ is of dimension $ L^{-4} $. The valid interaction Lagrangians of dimension $ L^{-4} $ with a dimensionless coupling constant $\lambda_{n}$ are:
\begin{enumerate}
	\item $L^{X_{o}X_{o}}_{dm}=-\lambda_{0} X_{o}^{\alpha}X_{o}^{\beta}\nabla_{\alpha}X_{\beta} $, which describes an interaction of ordinary matter with dark matter. It follows that $L^{X_{o}X_{o}}_{dm}=\lambda_{0}f_{o}^{2}u^{\alpha}\nabla_{\alpha}f=\lambda_{0}f_{o}^{2}u^{\alpha}u_{\alpha}\Phi=-\lambda_{0}f_{o}^{2}\Phi  $. Provided $ X^{\beta} $ is not a Killing vector, $ \Phi $ does not vanish, and dark matter can locally couple to ordinary matter.\par  Writing $\varPhi_{\alpha\beta}$ in terms of $f$, $u_{\alpha}$ and their derivatives allows $\Phi$ to be expressed as $\Phi=-u^{\alpha}\nabla_{\alpha}f$ from which the action $S_{dm}^{X_{o}X_{o}}=\lambda_{0}\int f^{2}_{o}u^{\alpha}\nabla_{\alpha}f\sqrt{-g}d^{4}x=-2\lambda_{0}\int ff_{o}u^{\alpha}\nabla_{\alpha}f_{o}\sqrt{-g}d^{4}x$ follows. Variation with respect to $f$ demands $u^{\alpha}\nabla_{\alpha}f_{o}=\pounds_{u}f_{o}=0$ so $f_{o}$ must be constant along the flow of \textbf{u}. Variation of the action $S_{om}^{XX}=-\lambda_{0}\int X^{\alpha}X^{\beta}\nabla_{\alpha}X_{o\beta}\sqrt{-g}d^{4}x=\lambda_{0}\int f^{2}u^{\alpha}\nabla_{\alpha}f_{o}\sqrt{-g}d^{4}x$ with respect to $f$ generates the same result: $\pounds_{u}f_{o}=0 $.
	\par $\nabla_{\alpha}X^{\alpha}=-\Phi\neq0 $ replaces the Lorentz constraint $ \nabla_{\alpha}X^{\alpha}=0 $. Therefore, $ \Phi $ is expected to be small, which is now demonstrated. In a matter-free region of spacetime, $ \Phi=R $. In the spherical metric (\ref{wow}), $\Phi=\frac{6a_{0}}{r}-\frac{2b(3+2\ln r)}{r^{2}}  $. Near the Earth $ 1.5\times10^{11}\,m $ from the gravitating mass of the Sun, $\Phi=1.42\times10^{-39}\,m^{-2} $. However, it is possible for $\lambda_{0}$ to be large enough so that $\lambda_{0}f_{o}^{2}\Phi$ is measurable. For example, if the average local value of $f_{o}$ is approximately unity, it obeys the condition $\pounds_{u}f_{o}=0$ so  $\lambda_{0}\approx a=\frac{c^{3}}{16\pi G}$ and the interaction has the value $ 1.13\times10^{-5} $. Dark matter particles in the presence of ordinary matter may be detected near the Earth. However, since attempts to detect dark matter near or in the Earth have not yet been confirmed, it is more likely that $\lambda_{0}f_{o}^{2}$ is not large.
	\item  $L^{dm}_{om}=-\lambda_{1} \nabla^{\alpha}X^{\beta}\nabla_{\alpha}X_{o\beta }=\lambda_{1}(\nabla^{\alpha}f\nabla_{\alpha}f_{o}-ff_{o}\nabla^{\alpha}u^{\beta}\nabla_{\alpha}u_{\beta})$ is another interaction between dark and ordinary matter. Integration by parts gives $S^{dm}_{om}=\lambda_{1}\int(\nabla^{\alpha}f\nabla_{\alpha}f_{o}+ff_{o}u^{\beta}\square u_{\beta})\sqrt{-g}d^{4}x$, and variation with respect to $ f $ yields $\lambda_{1}k^{2}f_{o}=0$, which requires the ordinary matter to be massless. That means the electromagnetic field can interact with dark matter. However, $\lambda_{1}$ can be so small that the dark matter will appear invisible.\par  Similarly, variation with respect to $f_{o}$ demands $\lambda_{1}k^{2}f=0$ so the dark matter is massless. Dark photons can interact with ordinary matter with a presumably weak coupling constant.
	\item $L^{dm}_{dm}=-\lambda_{2} \nabla^{\alpha}X^{\beta}\nabla_{\alpha}X_{\beta}=-\lambda_{2}(-\nabla^{\alpha}f\nabla_{\alpha}f+f^{2}\nabla^{\alpha}u^{\beta}\nabla_{\alpha}u_{\beta})$ is a self-interaction of dark matter. Integration by parts yields $S^{dm}_{dm}=\lambda_{2}\int(\nabla^{\alpha}f\nabla_{\alpha}f+f^{2}u^{\beta}\square u_{\beta})\sqrt{-g}d^{4}x$. Variation with respect to $ f $ generates $\lambda_{2}k^{2}f=0 $ so $k=0$. Dark photons interact with themselves through this Lagrangian. However, they must obey the linear wave equation (\ref{asymKG}), which forbids a first-order self-interaction. A minute second-order effect with self-interacting photons has been experimentally verified \cite{burke}. Dark photons may have a similar property, but that has yet to be proven.
	\item The other possible self-interaction of dark matter is $L_{dm}^{XX}=X^{\alpha}X^{\beta}\nabla_{\alpha}X_{\beta}=-f^{2}u^{\alpha}\nabla_{\alpha}f$. Variation of the action for this self-interaction with respect to $f$ generates a vanishing integrand, which implies the Lagrangian is invalid.
	\par Hence, there is no valid Lagrangian that permits a first-order self-interaction of dark matter particles.
\end{enumerate}
Dark matter is gravitationally attractive. That follows from the fact that a positive-definite Riemannian metric $g^{+}_{\alpha\beta}$ exists on any paracompact manifold. It fundamentally determines the curvature of spacetime that is attributed to gravitational attraction. From (\ref{gab}), $g^{+}_{\alpha\beta}=g_{\alpha\beta}+2X_{\alpha}X_{\beta}/|g^{\alpha\beta}X_{\alpha}X_{\beta}|$ involves a pair of line element covectors that adds to the geometry of the Riemannian metric. Moreover, since they describe dark matter particles, the pair of covectors in $g^{+}_{\alpha\beta}$ is gravitationally attractive, as is each particle in the pair. Dark matter persists when ordinary matter is absent because the line element covectors always exist at every point in a Lorentzian spacetime.
\par In summary, dark matter can be explained as a spin-1 unit vector boson or as an equivalent pair of Hermitian spin-1/2 fermions. Dark matter is gravitationally attractive and interacts weakly with ordinary matter. More specifically, the electromagnetic field interacts with dark matter so weakly that it appears invisible. Similarly, dark photons are expected to interact extremely weakly with ordinary matter. There is no first-order self-interaction of dark matter particles. Thus, a massive neutral vector boson and a pair of neutral neutrinos with a presently unknown mass are viable candidates for a particle theory of dark matter. There is no conflict between the geometrical and particle descriptions of dark matter because the unit line element covectors of the Lorentzian metric satisfy the Klein-Gordon wave equation that describes spin-1 particles. In the absence of ordinary matter, dark matter is still present and curves spacetime. GR cannot describe dark matter because the line element field has been ignored.\par Although there is both a geometrical and a quantum theoretical description of dark matter in terms of the line element field vectors, the geometrical description of dark matter is fundamental to gravity. The gravitational force in a metric theory of gravity is entirely determined by the connection on the Lorentzian manifold. The symmetric Levi-Civita connection is constructed from the Lorentzian metric and its first derivatives. The metric is a solution to the Einstein equation, which depends on the line element covectors in $ \varPhi_{\alpha\beta} $ and the matter energy-momentum tensor. Hence, gravitational forces in MGR are completely determined by the curvature of spacetime resulting from all sources of matter and energy. Whereas GR geometrically determines the force of gravity from ordinary matter, it cannot describe the gravitational force from dark matter without introducing a dark matter profile that represents the invisible matter.
\section{Tests of GR compared to MGR}
The three classic tests of general relativity are the following: the perihelion precession of Mercury, gravitational lensing and the Shapiro time delay. The changes induced from the line element field components are calculated and compared with the results from GR for each of these tests. The metric will be that of (\ref{g}) with the spherical solution for $ \lambda $ given by (\ref{wow}) with $ c_{1}=-\frac{2GM}{c^{2}} $ and $ c_{2}=1 $. Then, the time-like norm $ g_{\alpha\beta}\frac{dx^{\alpha}}{d\tau}\frac{dx^{\beta}}{d\tau}=-1 $ leads to 
\begin{equation}\label{tlc}
	-e^{-\lambda} c^{2}(\frac{dt}{d\tau})^{2}+e^{\lambda}(\frac{dr}{d\tau})^{2}+r^{2}[(\frac{d\theta}{d\tau})^{2}+sin^{2}\theta (\frac{d\varphi}{d\tau})^{2}]=-1
\end{equation} and a photon has the null arc length
\begin{equation}\label{nc}
	-e^{-\lambda} c^{2}(\frac{dt}{d\tau})^{2}+e^{\lambda}(\frac{dr}{d\tau})^{2}+r^{2}[(\frac{d\theta}{d\tau})^{2}+sin^{2}\theta (\frac{d\varphi}{d\tau})^{2}]=0.
\end{equation}\par  The value of the parameters $ a_{0} $ and $ b $ with a gravitating mass $ M $ are obtained from (\ref{a0zeta})  and  (\ref{b}) respectively with $ \zeta=3.55\times 10^{-6} $ from (\ref{zeta}).
\subsection{Pericenter advance of celestial objects}
A test mass moves in a gravitational field free of external forces according to the geodesic equation
\begin{equation}\label{geo}
	\frac{d^{2}x^{\lambda}}{d\tau^{2}}+\Gamma^{\lambda}_{\alpha\beta}\frac{dx^{\alpha}}{d\tau}\frac{dx^{\beta}}{d\tau}=0
\end{equation}and satisfies (\ref{tlc}). From the spherically symmetric time independent metric, it is well known \cite{Ciu} that in the equatorial plane, $ \theta=\frac{\pi}{2} $, $ \frac{d\theta}{d\tau}=0 $ and the constants of motion are
\begin{equation}\label{L}
	L:=r^{2}\frac{d\varphi}{d\tau}
\end{equation}and
\begin{equation}\label{K}
	K:=e^{-\lambda}\frac{dt}{d\tau}.
\end{equation}The $ x^{1} $ geodesic yields
\begin{equation}\label{x1g}
	\frac{d^{2}r}{d\tau^{2}}-\frac{\lambda^{\prime}c^{2}K^{2}}{2}+\frac{\lambda^{\prime}}{2}(\frac{dr}{d\tau})^{2}-\frac{e^{-\lambda}L^{2}}{r^{3}}=0
\end{equation}
and the $ x^{3}$ geodesic gives 
\begin{equation}\label{x3g}
	\frac{d^{2}\varphi}{d\tau^{2}}=-\frac{2}{r}\frac{d\varphi}{d\tau}\frac{dr}{d\tau}
\end{equation}from which
\begin{equation}\label{key}
	\frac{d^{2}r}{d\tau^{2}}=\frac{d^{2}r}{d\varphi^{2}}\frac{L^{2}}{r^{4}}-\frac{2L^{2}}{r^{5}}(\frac{dr}{d\varphi})^{2}.  
\end{equation}By introducing the variable $ y:=\frac{1}{r} $, it follows from (\ref{x1g}) that 
\begin{equation}\label{Peq}
	\begin{split}
		\frac{d^{2}y}{d\varphi^{2}}+y-(\frac{a_{0}}{2}+\frac{GM}{c^{2}L^{2}})=by(1+2\ln y)+\frac{3GMy^{2}}{c^{2}}+\frac{b}{L^{2}y}-\frac{a_{0}}{2L^{2}y^{2}}.
	\end{split}
\end{equation} The homogeneous equation from (\ref{Peq})
\begin{equation}\label{}
	\frac{d^{2}y_{0}}{d\varphi^{2}}+y_{0}-A_{1}=0
\end{equation} has the solution
\begin{equation}\label{key}
	y_{0}=A_{1}(1+e\cos\varphi)
\end{equation}where $ A_{1}:=\frac{a_{0}}{2}+\frac{GM}{c^{2}L^{2}} $, $ e:=\frac{c_{4}}{A_{1}} $ and $ c_{4} $ is an arbitrary constant. This reduces to the Newtonian solution for $ a_{0}=0 $. A solution to (\ref{Peq}) is sought with a small perturbation $ y_{1} $ around $ y_{0} $, which means $ \ln y\approx\ln y_{0}\leq\ln A_{1}(1+e) $ for a maximum contribution from $ e\cos\varphi $ of $e$. By defining $ B:=1-b(1+2\ln A_{1}(1+e)) $ as a parameter independent of $ \varphi $, equation (\ref{Peq}) is well approximated by 
\begin{equation}\label{Peq1}
	\frac{d^{2}y}{d\varphi^{2}}+By-(\frac{a_{0}}{2}+\frac{GM}{c^{2}L^{2}})=\frac{3GMy^{2}}{c^{2}}+\frac{b}{L^{2}y}-\frac{a_{0}}{2L^{2}y^{2}}.
\end{equation}
With $ y=y_{0}+y_{1} $ where $ y_{1}<<y_{0} $ and the change of variable $ \chi=\sqrt{B}\varphi $, (\ref{Peq1}) is approximated by
\begin{equation}\label{Peq2}
	\frac{d^{2}y_{1}}{d\chi^{2}}+y_{1}=\frac{3GMy_{0}^{2}}{Bc^{2}}+\frac{b}{BL^{2}y_{0}}-\frac{a_{0}}{2BL^{2}y_{0}^{2}},
\end{equation}which is expanded to second order in $ e $ to give
\begin{equation}\label{y1p}
	\frac{d^{2}y_{1}}{d\chi^{2}}+y_{1}=A_{2}+A_{3}e\cos\chi+A_{4}e^{2}\cos^{2}\chi
\end{equation}
where $ A_{2}:=\frac{3GMA_{1}^{2}}{Bc^{2}}+\frac{b}{BL^{2}A_{1}}-\frac{a_{0}}{2BL^{2}A_{1}^{2}} $, $ A_{3}:=\frac{6GMA_{1}^{2}}{Bc^{2}}-\frac{b}{BL^{2}A_{1}}+\frac{a_{0}}{BL^{2}A_{1}^{2}} $ and $ A_{4}:=\frac{3GMA_{1}^{2}}{Bc^{2}}+\frac{b}{BL^{2}A_{1}}-\frac{3a_{0}}{2BL^{2}A_{1}^{2}} $. Equation (\ref{y1p}) has the solution
\begin{equation}\label{key}
	\begin{split}
		y_{1}=A_{2}+b_{3}\sin\chi+b_{4}\cos\chi+\frac{A_{3}e\chi\sin\chi}{2}-\frac{A_{4}e^{2}\cos2\chi}{6}+\frac{A_{4}e^{2}}{2}
	\end{split}
\end{equation}where $ b_{3} $ and $ b_{4} $ are arbitrary constants. The term containing $ \varphi\sin\varphi $ can grow in $ \varphi $ and is most interesting. The other terms are small and are discarded so that
\begin{equation}\label{key}
	y\approx \frac{A_{1}}{B}[1+e(\cos\chi+\frac{BA_{3}\chi\sin\chi} {2A_{1}})]. 
\end{equation}Since $ \alpha:=\frac{BA_{3}}{2A_{1}}<<1 $, it follows from the identity $ \cos[\chi(1-\alpha)]\approx\cos\chi+\alpha\chi\sin\chi $ that
\begin{equation}\label{key}
	y\approx \frac{A_{1}}{B}[1+e\cos(\chi(1-\alpha))]
\end{equation}The pericenter occurs at the minimum value of $ r $. This requires $ \cos[\chi(1-\alpha)]=1 $ which means $ \chi=\frac{2\pi n}{1-\alpha}\approx2\pi n(1+\alpha) $ for $ n $ orbits. The advance of the pericenter from each orbit is therefore $ \Delta\chi=2\pi\alpha $ so
\begin{equation}\label{peri}
	\begin{split}
		\Delta\varphi=\frac{6\pi A_{1} GM}{\sqrt{B}c^{2}}-\frac{\pi b}{\sqrt{B}L^{2}A_{1}^{2}}+\frac{\pi a_{0}}{\sqrt{B}L^{2}A_{1}^{3}}\\
		=\frac{6\pi GM}{\sqrt{B}c^{2}}(\frac{a_{0}}{2}+\frac{1}{a(1-e^{2})})-\frac{\pi b}{\sqrt{B}L^{2}A_{1}^{2}}+\frac{\pi a_{0}}{\sqrt{B}L^{2}A_{1}^{3}}
	\end{split}
\end{equation}where $ L^{2}=\frac{GMa(1-e^{2})}{c^{2}} $ and $ a $ is the semi-major axis. 
\subsubsection{Precession of the perihelion of Mercury}
In the solar system with the gravitating mass of the Sun containing some dark matter, $M=1.989\times10^{30}$kg and $ a_{0}=5.737\times10^{-41} $ so $ A_{1}\approx\frac{GM}{c^{2}L^{2}}=1.803\times10^{-11} $, $ b=-2.911\times10^{-19} $, and $B=1+1.40\times10^{-17} $, which is unity to seventeen orders of magnitude. The first term equals $7.97\times10^{-37}$rad, which is neglected. The second term is the $\frac{6\pi GM}{c^{2}a(1-e^{2})}  $ expression of GR. For Mercury, $ a=5.7909\times10^{10} $\,m and $ e=0.2056 $ and its perihelion advance calculated from (\ref{peri}), term by term, is
\begin{equation}\label{Dphi}
	\Delta\varphi=5.0201\times10^{-7}+3.4341\times10^{-11}+3.7539\times10^{-22}\;\text{rad}
\end{equation}or 42.98 arc seconds per century, which is the same as the GR result to 4 orders of magnitude.\par The third term in (\ref{Dphi}) represents the dark energy correction, which is negligible in the solar system. However, the second term pertains to dark matter, which competes with the correction of the quadrupole moment of the Sun: $ \Delta\varphi_{quad}=-\frac{3\pi J_{2}R^{2}_{\odot}(3\sin^{2}\iota-1)}{a^{2}(1-e^{2})^{2}}=2.8361\times10^{-10}  $ using the currently accepted value of $ J_{2}=2.0\times10^{-7} $ for the quadrupole moment of the Sun, $ R_{\odot }=6.95997\times10^{8} $\,m and $ \iota=7.00487^{\circ} $. The dark matter term represents a correction of 0.3434 to the quadrupole term, which gives a total perihelion advance of 43.01 arc sec/century from GR, dark matter and the quadrupole moment of the Sun. That result compares very well with the measured value of $ 43.0115\pm 0.0085$\,arc seconds per century for the perihelion advance of Mercury as reported in Table 1 of Ref. \cite{Solar}.
\subsubsection{Precession of S2 around Sagittarius A*} The star S2 orbiting the black hole Sagittarius A* at the center of the Milky Way galaxy is a precision test of the gravitational field around a massive black hole. The mass of the black hole is estimated \cite{Abu} to be $ 4.251\times10^{6}M_{\odot} $ and the observed semi-major axis of S2 is 125.058 mas or $ 1.543\times10^{14} $\,m using $ R_{0}=8.247 $\,kpc. Since $ a_{0}=2.4387\times10^{-34} $, $ b=-1.2374\times10^{-12} $, $ A_{1}\approx\frac{GM}{c^{2}L^{2}}$ and  
\begin{equation}\label{key}
	\Delta\varphi=3.5269\times10^{-3}+2.0777\times10^{-8}+1.3741\times10^{-16}.
\end{equation}The GR term is dominant over the dark matter correction by 5 orders of magnitude and $\Delta\varphi\approx $12.1 arcmin per revolution. 
\subsection{Gravitational lensing}
From the null arc length (\ref{nc}) in the equatorial plane with $ \frac{d\theta}{d\tau}=0 $, 
\begin{equation}\label{drdtau}
	(\frac{dr}{d\tau})^{2}+\frac{L^{2}}{r^{2}e^{\lambda}}=c^{2}K^{2}
\end{equation}using (\ref{L}) and (\ref{K}). Since $\frac{dr}{d\tau}=\frac{dr}{d\varphi}\frac{L}{r^{2}}  $ and $\frac{dr}{d\varphi}=-r^{2}\frac{dy}{d\varphi}$ with $ y:=\frac{1}{r} $, it follows from (\ref{wow}) that 
\begin{equation}\label{key}
	(\frac{dy}{d\varphi})^{2}-a_{0}y-2by^{2}\ln y-\frac{2GM}{c^{2}}y^{3}+y^{2}=\frac{c^{2}K^{2}}{L^{2}}.
\end{equation}Differentiating this equation with respect to $ \varphi $ yields
\begin{equation}\label{Leq}
	\frac{d^{2}y}{d\varphi^{2}}-\frac{a_{0}}{2}+y=by+2by\ln y+\frac{3GMy^{2}}{c^{2}}.
\end{equation}The homogenous equation from (\ref{Leq}) has the solution 
\begin{equation}\label{key}
	y_{0}=\frac{a_{0}}{2}+\frac{\sin\varphi}{I},
\end{equation} which satisfies the boundary conditions $y_{0}(\varphi=0)=\frac{a_{0}}{2}$ and $y_{0}^{\prime}(\varphi=0)=\frac{1}{I}$ where $ I $ is the impact parameter and the prime denotes the first derivative with respect to $ \varphi $. \par A solution to (\ref{Leq}) is sought with a small perturbation $ y_{1} $ around $ y_{0} $, which means $ \ln y\approx\ln y_{0}$. The expression $1-b(1+2\ln(\frac{a_{0}}{2}+\frac{\sin\varphi}{I}))  $ for the solar system with $ I=6.957\times 10^{8} $ and $ b=-2.911\times 10^{-19} $ has the constant value of 1, to 17 decimal places in the interval $ 0\leq\varphi\leq\pi. $ The same expression for a galaxy cluster of mass $ 10^{14}M_{\odot} $ with $ I=1.543\times 10^{22} $ and $ b=-2.911\times 10^{-5} $ generates the value 0.9971 with a variance of 0.06\% over the same interval, and a large galaxy cluster of mass $ 10^{15}M_{\odot} $ with $ I=7.715\times 10^{22} $ and $ b=-2.911\times 10^{-4} $ yields 0.9696 with a variance of 0.37\% over the interval. That expression can then be taken to be essentially independent of $ \varphi $ and is represented by the parameter $ B:=1-b(1+2\ln(\frac{a_{0}}{2}+\frac{1}{I})) $, which includes the maximum value of $ y_{0} $. A change of variables with $ \chi:=\sqrt{B}\varphi $ allows equation (\ref{Leq}) to be expressed as
\begin{equation}\label{yphi}
	\frac{d^{2}y}{d\chi^{2}}-\frac{a_{0}}{2B}+y=\frac{3GMy^{2}}{Bc^{2}},
\end{equation}which has the homogeneous solution $ y_{0}=\frac{a_{0}}{2B}+\frac{\sin\chi}{I} $. With $ y=y_{0}+y_{1} $ where $ y_{1}<<y_{0} $, 
\begin{equation}\label{ychi2}
	\frac{d^{2}y_{1}}{d\chi^{2}}+y_{1}=\frac{3GMy_{0}^{2}}{Bc^{2}},
\end{equation} 
which has the solution
\begin{equation}\label{key}
	\begin{split}
		y_{1}=A-\frac{C\chi\cos\chi}{2}+\frac{D\cos 2\chi}{6}+\frac{D}{2}+c_{6}\sin\chi+c_{7}\cos\chi
	\end{split}
\end{equation}where $ 	A:=\frac{3GMa_{0}^{2}}{4c^{2}B^{3}}$, $C:=\frac{3GMa_{0}}{c^{2}B^{2}I}$, $D:=\frac{3GM}{c^{2}I^{2}B}$
and $ c_{6} $ and $ c_{7} $ arbitrary constants. Applying the boundary conditions $y_{0}(\chi=0)=\frac{a_{0}}{2B}$ and $y_{0}^{\prime}(\chi=0)=\frac{1}{I}$ to $ y=y_{0}+y_{1} $ gives
\begin{equation}\label{c67}
	c_{6}=\frac{C}{2},\enspace c_{7}=-A-\frac{2D}{3}.
\end{equation}\par The deflection angle $ \delta $ is now calculated from $ y(\pi+\delta)=0 $. Using $ \sin(\pi+\delta)\approx-\delta $, $\cos(\pi+\delta)\approx-1  $, $\cos(2\pi+2\delta)\approx1  $, (\ref{c67}) and the definitions of $ A $, $ C $ and $ D $, it follows that
\begin{equation}\label{lens}
	\delta=\frac{4GM}{c^{2}BI}+\frac{3GMa_{0}^{2}I}{2c^{2}B^{3}}+\frac{a_{0}I}{2B}+\frac{3\pi GMa_{0}}{2c^{2}B^{2}}.
\end{equation}
\subsubsection{Gravitational lensing from the Sun}
With $ I=R_{\odot}=6.957\times 10^{8}\,m $ and $ a_{0} $ as given above, the last three terms are 26 to 63 orders of magnitude smaller than the first term, and are insigificant. The correction to the GR term is therefore from $ B $ in $\delta=\frac{4GM}{c^{2}BI}$ where $ B=1+1.156\times10^{-17} $ using the Earth-Sun distance of $ 1.496\times10^{11} $\,m. In the solar system, there is no correction to the GR lensing result to 17 orders of magnitude.
\subsubsection{Gravitational lensing from galactic clusters}
In the solar system, the value of the dark energy parameter $ a_{0} $ is far too small to have any gravitational lensing effect. However, that is not the case for galactic clusters, which have enormous masses and large radii. The parameter $ a_{0} $ in MGR scales with the gravitating mass and provides significant contributions to the gravitational lensing of those entities. \par Galactic clusters typically have a total mass of $ 10^{14}\,M\odot$ to $10^{15}\,M\odot $ and a diameter of $ 1$ to $ 5 $\,Mpc. They can produce multiple images separated by several arc minutes. The gravitational lensing from spherical galactic clusters inline with the source and the observer can be easily calculated from (\ref{lens}). \par The strong lensing galactic cluster SDSS J0900+2234 with an Einstein ring was discussed in Ref. \cite{Wie}. They reported in Table 4 a gravitating mass of $1.48\pm0.715\times 10^{14}\,M_{\odot}$ with a dark matter contribution calculated from the NFW dark matter profile. The redshift of the lens and the source are $z_{l}=0.4890 $ and $z_{s}=2.0325$ respectively, and the Einstein ring had a radius of $\theta_{E}=8.0\pm2.7 $\,arcsec. The lens equation in GR requires $ \hat{\alpha}=\frac{4GM}{c^{2}I}=\frac{D_{s}}{D_{ls}}\theta_{E}\approx\frac{z_{s}}{z_{s}-z_{l}}\theta_{E}=10.5\pm3.6 $, which must be used to compare (\ref{lens}) to the Einstein radius because $\hat{\alpha}  $ is the main part of the first term in (\ref{lens}). $D_{s}  $ and $ D_{ls} $ are the angular diameter distances to the source and from the lens to the source, respectively. In MGR, a radius of $ 500 $\,kpc is used for the impact parameter $ I $. If the lower limit of the gravitating mass $ 0.765\times10^{14}\,M_{\odot} $ is used, $ a_{0}=4.39\times 10^{-27} $, $ b=-2.23\times10^{-5} $ and $ B=0.9978 $. The term by term gravitational lensing from (\ref{lens}) is then: $2.936\times10^{-5}+5.071\times10^{-14}+3.394\times10^{-5}+ 2.346\times10^{-9}\, \text{rad}=13.06 $\,arcsec; whereas the GR term (with no dark matter) would generate the single value of $6.041$\,arcsec. The MGR value of 13.06\,arcsec compares well and is less than the reported upper limit as adjusted of 14.1\,arcsec. However, using the mean gravitating mass of $ 1.48\times10^{14}\,M\odot $ generates a lensing of 25.31\,arcsec, which indicates the gravitating mass is too large relative to the size of the observed Einstein radius. There is some uncertainty in the assumption of a perfect line of sight from the Earth to the cluster, and the mass estimate from the NFW profile for the dark halo of SDSS J0900+2234 may be too large.  \par The gravitational lensing of a large galactic cluster of mass $10^{15}\,M_{\odot}  $ with a radius of $ 2.5  $\,Mpc is calculated to be 487.14\,arcsec in MGR as compared to 15.80\,arcsec from the GR term (with no dark matter) using $ a_{0}=5.74\times 10^{-26} $, $ b=-2.91\times10^{-4} $ and $ B=0.9696 $. Thus, MGR generates the general characteristics of the gravitational lensing of galactic clusters from the dark parameters of the line element vectors.
\subsection{Gravitational time delay}
\subsubsection{Shapiro time delay in the Solar system}
From (\ref{drdtau}), 
\begin{equation}\label{key}
	(\frac{dr}{d\tau})^{2}=c^{2}K^{2}-\frac{L^{2}}{r^{2}e^{\lambda}}, 
\end{equation}which is equivalent to
\begin{equation}\label{key}
	\frac{dr}{dt}=ce^{-\lambda}\sqrt{1-\frac{L^{2}e^{-\lambda}}{c^{2}K^{2}r^{2}}}
\end{equation}by considering only the positive square root. At the closest approach to the sun, $r=I$ and $ \frac{dr}{dt}=0 $ so $ \frac{L^{2}e^{-\lambda}}{c^{2}K^{2}I^{2}}=1 $ and $ \frac{L^{2}}{K^{2}}=c^{2}I^{2}e^{\lambda}\mid_{r=I} $. Then
\begin{equation}\label{key}
	\frac{dr}{dt}=ce^{-\lambda}\sqrt{1-\frac{I^{2}(1-a_{0}r+2b\ln r-\frac{2GM}{c^{2}r})}{r^{2}(1-a_{0}I+2b\ln I-\frac{2GM}{c^{2}I})}}.
\end{equation} Inspection of $ e^{-\lambda} $ for $ r\approx 10^{11} $ shows $ -a_{0}r+2b\ln r-\frac{GM}{c^{2}r} $ to have the approximate values $-1.616\times10^{-30}-7.826\times10^{-18}-1.477\times10^{-8}  $. As a first approximation, the terms involving $ a_{0} $ and $ b $ inside the square root expression are dropped. Then,
\begin{equation}\label{key}
	\frac{dr}{dt}\approx ce^{-\lambda}\sqrt{\frac{r^{2}-I^{2}}{r^{2}}-\frac{2GMI(r-I)}{c^{2}r^{3}}},
\end{equation}which leads to the approximation
\begin{equation}\label{key}
	\frac{dt}{dr}=\frac{r}{c\sqrt{r^{2}-I^{2}}}[1-2b\ln r+\frac{2GM}{c^{2}r}+\frac{GMI}{c^{2}r(r+I)}].
\end{equation}Proper time on Earth is close to the Schwarzschild coordinate time, so the total time of travel of the radar signal is essentially
\begin{equation}\label{key}
	T_{t}=2\int_{I}^{r_{E}}\frac{dt}{dr}+2\int_{I}^{r_{p}}\frac{dt}{dr}
\end{equation}where $ r_{E} $ and $ r_{p} $ are the distances of the Earth and a planet from the Sun, respectfully. Integrating each term gives
\begin{equation}\label{key}
	T_{t}=\frac{2}{c}(\sqrt{r_{E}^{2}-I^{2}}+\sqrt{r_{p}^{2}-I^{2}})+\Delta T
\end{equation}where $ \Delta T $ is the time delay due to the gravitational field:
\begin{equation}\label{key}
	\begin{split}
		\Delta T=\frac{2GM}{c^{3}}(\frac{\sqrt{r_{E}^{2}-I^{2}}}{r_{E}+I}+\frac{\sqrt{r_{p}^{2}-I^{2}}}{r_{p}+I})\\+\frac{4GM}{c^{3}}[\ln\frac{r_{E}+\sqrt{r_{E}^{2}-I^{2}}}{I}+\ln\frac{r_{p}+\sqrt{r_{p}^{2}-I^{2}}}{I}]\\-\frac{4b}{c}[\sqrt{r_{E}^{2}-I^{2}}(\ln r_{E}-1)+\sqrt{r_{p}^{2}-I^{2}}(\ln r_{p}-1)\\-I\tan^{-1}(\frac{I}{\sqrt{r_{E}^{2}-I^{2}}})-I\tan^{-1}(\frac{I}{\sqrt{r_{p}^{2}-I^{2}}})+\pi I]. 
	\end{split}
\end{equation} With $ r_{E}=1.496\times10^{11}m $ for the Earth-Sun distance, $ r_{p}=1.08\times10^{11}m $ for the Venus-Sun distance, $ I=R_{\odot}=6.957\times10^{8}m $, the result calculated from the GR terms is 252$\mu s $. The last term is the correction from the parameter $ b $ of 2.46$\times10^{-8}  \mu s$. \par Thus, time delay corrections to the GR results from the line element vectors in the extended Schwarzschild metric of MGR are very small in the solar system. However, that is not the case for larger structures in the cosmos.
\subsubsection{Galactic Shapiro delay from the Crab pulsar PSR B0531+21} The galactic Shapiro delay is more generally calculated from 
\begin{equation}\label{DeltaT}
	\Delta t:=\Delta t_{GR}+\Delta t_{b}+\Delta t_{a}=-\frac{2}{c^{3}}\int _{source}^{observer}\phi(s)ds
\end{equation}where $ \phi $ is the gravitational potential in the extended Schwarzschild metric
\begin{equation}\label{}
	\phi(r)=-\frac{GM}{r}+bc^{2}\ln(r)-\frac{a_{0}rc^{2}}{2}.
\end{equation}The Crab pulsar has a visible line of sight from the Earth. The static gravitational effect of the Milky Way galaxy acting as a lens on light emitted from the pulsar and observed on Earth is depicted (not to scale) in figure 1:
\begin{figure}[H]
	\begin{tikzpicture}[x=0.5cm,y=0.5cm,z=0.3cm,>=stealth]
		\draw (-.5,2.5) node {I};
		\draw (3.2,3.6) node {r};
		\draw (2.8,5) circle (.1cm);
		\draw(3,5.8) node {Observer};
		\draw (8.5,5) circle (.15cm);
		\draw(8.5,5.8) node {Source};
		\draw (0,0) circle (.35cm);
		\draw(0,-1.1) node {Lens};
		\draw (0,0) -- (0,5) -- (8.660,5);
		\draw (0,0) --(2.887,5);
		\draw [dashed] (0,0) -- (8.660,5);
		\draw (0,0) --(5,5);
		\draw (0,2) arc (90:45:2);
		\node[] at (70:1.5)  {$\beta$};
		\draw (0,3) arc (90:30:3);
		\node[] at (52:2.5)  {$\phi$};
		\draw (0,1) arc (90:60:1);
		\node[] at (38:3.5)  {$\theta$};
		\draw [latex-latex](0,8)--(3,8);
		\draw [latex-latex](3,8)--(8.660,8);
		\node[] at (1.5,7.5)  {$D_{l}$};
		\node[] at (5.5,7.5) {$D_{s}$};
		\draw [latex-latex](0,6.8)--(5,6.8);
		\node[] at (2.5,6.4) {s}; 
		\draw [latex-latex](0,8.5)--(8.660,8.5);
		\node[] at (4.3,8.9) {l}; 
	\end{tikzpicture}
	\caption{Angular arrangement for the gravitational lensing by the Milky Way}
\end{figure}The origin is located at the galactic center, which is the location of the lens. The limits of the integrations are reversed to ensure the time delays are positive. The first term of (\ref{DeltaT}) appears from General Relativity:
\begin{equation}\label{key}
	\Delta T_{GR}=\frac{2GM}{c^{3}}(\int_{0}^{\theta}\frac{d\phi}{\cos\phi}-\int_{0}^{\beta}\frac{d\phi}{\cos\phi})
\end{equation}using $ \cos\phi=\frac{I}{r} $ and $ \frac{ds}{r}=\frac{d\phi}{\cos\phi} $. Since $ \int \frac{d\phi}{\cos\phi}=ln(\sec\phi+\tan\phi) $, the integration gives
\begin{equation}\label{DeltaGR}
	\Delta T_{GR}=\frac{2GM}{c^{3}}\ln(\frac{\sqrt{I^{2}+l^{2}}+l}{\sqrt{I^{2}+D_{l}^{2}}+D_{l}})
\end{equation}where $ \tan\beta=\frac{D_{l}}{I} $ and $ \cos\beta=\frac{I}{\sqrt{I^{2}+D_{l}^{2}}}.  $\par The second term of (\ref{DeltaT}) is the dark matter correction to $ \Delta T $, which involves the parameter $b$. It involves the integrals $\int \frac{d\phi}{\cos^{2}\phi}=\tan\phi  $ and $ \int \frac{\ln\cos\phi }{\cos^{2}\phi}d\phi=-\phi+\tan\phi+\tan\phi\ln\cos\phi $:
\begin{equation}\label{key}
	\begin{split}
		\Delta T_{b}=-\frac{2bI\ln I}{c}(\int_{0}^{\theta}\frac{d\phi}{\cos^{2}\phi}-\int_{0}^{\beta}\frac{d\phi}{\cos^{2}\phi}) +\frac{2bI}{c}(\int_{0}^{\theta}\frac{\ln\cos\phi }{\cos^{2}\phi}d\phi-\int_{0}^{\beta}\frac{\ln\cos\phi }{\cos^{2}\phi}d\phi), 
	\end{split}
\end{equation} which integrates to
\begin{equation}\label{Deltab}
	\begin{split}
		\Delta T_{b}=-\frac{2b\ln I}{c}(l-D_{l})+\frac{2bI}{c}(-\tan^{-1}(\frac{l}{I})+\tan^{-1}(\frac{D_{l}}{I}))\\
		+\frac{2b}{c}(l+l\ln(\frac{I}{\sqrt{I^{2}+l^{2}}})-D_{l}-D_{l}\ln(\frac{I}{\sqrt{I^{2}+D_{l}^{2}}}))\qquad b<0.
	\end{split}
\end{equation}
\par The third term of (\ref{DeltaT}) is proportional to the dark parameter $ a_{0} $: 
\begin{equation}\label{key}
	\Delta T_{a}=\frac{a_{0}I^{2}}{c}(\int_{0}^{\theta}\frac{d\phi}{\cos^{3}\phi}-\int_{0}^{\beta}\frac{d\phi}{\cos^{3}\phi}),
\end{equation}which, using the integral $ \int\frac{d\phi}{\cos^{3}\phi}=\frac{1}{2}\frac{\tan\phi}{\cos\phi}+\frac{1}{2}\ln(\sec\phi+\tan\phi) $ integrates to
\begin{equation}\label{Deltaa0}
	\begin{split}
		\Delta T_{a}=\frac{a_{0}}{2c}(l\sqrt{l^{2}+I^{2}}-D_{l}\sqrt{D_{l}^{2}+I^{2}})+\frac{a_{0}I^{2}}{2c}\ln(\frac{\sqrt{l^{2}+I^{2}}+l}{\sqrt{D_{l}^{2}+I^{2}}+D_{l}})\qquad a_{0}>0.
	\end{split}
\end{equation} 
The impact parameter is estimated from: \newline $
I=r_{G}\sqrt{1-(\sin\delta_{S}\sin\delta_{G}+\cos\delta_{S}\cos\delta_{G}\cos(\beta_{S}-\beta_{G}))^{2}}$ where $ \delta_{S} $ and $ \delta_{G} $ are the declinations of the source and galactic centre respectively, and $ \beta_{S} $ and $ \beta_{G} $ are their right ascensions in the equatorial coordinate system. The galactic centre of the Milky Way (MW) has coordinates $ \delta_{G}=-29^{\circ} 00'28.1'' $, $ \beta_{G}=17^{h}45^{m}40.04^{s} $ and the Crab pulsar is located at $ \delta_{S}=22^{\circ} 00'52.1'' $, $ \beta_{S}=05^{h}34^{m}32^{s} $. With $ r_{G}=8.3 $\,kpc, the impact parameter is estimated to be $ I=1.073 $\,kpc. $ D_{l}=8.3 $\,kpc and $ D_{s}=2.2 $\,kpc.  \par The estimate of $5.4\times10^{10}M_{\odot}$ in \cite{Posti} for the ordinary mass of the Milky Way within 20 kpc is used for $M$, from which $a_{0}=3.729\times10^{-30}$ and $b=-1.892\times10^{-8}\sqrt{\upsilon}$; $\upsilon$ is set at $\upsilon=2.54$ as the dark matter/ordinary matter ratio as determined from \cite{Posti} up to 20 kpc from the galactic center. The Shapiro time delays are then: $\Delta t_{GR}=1.44$d, $\Delta t_{b}=6.19$d, and $\Delta_{a}=2.37\times10^{-3}$d for a total delay of 7.63d. This result can be compared to three calculations of the Shapiro time delay for the Crab pulsar quoted in Ref. \cite{Des} as (5.14-15.42)\,d, 1.98\,d and 3.84\,d. The first result was determined from the sum of the gravitational potentials of the MW and the Crab pulsar. The latter two results used the NFW dark matter profile to determine the additional gravitational delay due to dark matter. The MGR result is larger than the last two noted results commensurate with the fact that the NFW dark matter profile is not valid \cite{Cautun} in the interval [5,30] kpc from the galactic centre of the Milky Way
\section{Cosmological aspects of MGR}
\subsection{MGR vs. $ \Lambda  $CDM}
The standard $ \Lambda$CDM model of cosmology is based on the following assumptions:
\begin{enumerate}
	\item   General Relativity is correct and describes gravity on cosmological scales.
	\item The Universe consists of ordinary matter, radiation including neutrinos, cold dark matter responsible for structure formation and a cosmological constant $ \Lambda $ that represents dark energy, which describes the accelerated expansion of the observable Universe. 
	\item The Cosmological Principle states that the Universe is homogeneous and isotropic at large scales $>$100\,Mpc, which is described by the flat FLRW metric. 
	\item Inflation, a violent accelerated expansion of the Universe, immediately followed the Big Bang. 
\end{enumerate} \par The $ \Lambda$CDM model has been successful in explaining most cosmological observations including the accelerating expansion of the Universe, the power spectrum and statistical properties of the cosmic microwave background anisotropies, the spectrum and statistical properties of large scale structures of the Universe and the observed abundances of different types of light nuclei.\par Relative to the $ \Lambda$CDM model, MGR has, but is not limited to, the following attributes:
\begin{enumerate}
	\item GR is incomplete; it does not contain a symmetric tensor that represents the energy-momentum of the gravitational field. MGR completes GR by including $\varPhi_{\alpha\beta}$, which leaves the Einstein equation intact with the total energy-momentum tensor $ T_{\alpha\beta} $ given by (\ref{Tab}).
	\item The matter tensor $\tilde{T}_{\alpha\beta}$ represents ordinary matter, radiation and neutrinos. $ \varPhi_{\alpha\beta} $ introduces the line element covectors into the Einstein equation that geometrically describe dark matter. The cosmological constant is dynamically replaced by the scalar $\Phi$. $\Phi>-2\varPhi_{00}$ is gravitationally repulsive and locally describes dark energy. $\Phi$ satisfies (\ref{intPhi}), which keeps a Lorentzian spacetime in balance; it cannot rip apart or contract to oblivion. 
	\item The Cosmological Principle applies to MGR in a limited sense as compared to the $ \Lambda$CDM model. $ \varPhi_{00} $ and $ \Phi $ are both spacetime dependent. MGR has the structure to describe anisotropies in galactic clusters in time and space that are in tension with the $ \Lambda$CDM model.
	\item MGR provides additional support for an inflationary model of the Universe. In the FLRW metric of MGR, the conservation equation followed by the Friedmann equations are
	\begin{equation}\label{CE}
		\dot{\varrho}-\frac{c^{2}}{8\pi G}\dot{\varPhi_{00}}=-\frac{3\dot{a}}{a}[\varrho+\frac{p}{c^{2}}-\frac{c^{2}}{8\pi G}(\varPhi_{00}+\frac{P_{d}}{c^{2}})],
	\end{equation}
	\begin{equation}\label{F1}
		\frac{\ddot{a}}{a}=-\frac{4\pi G}{3}(\varrho+\frac{3p}{c^{2}})+\frac{\varPhi_{00}c^{2}}{6}+\frac{P_{d}}{2}
	\end{equation}and
	\begin{equation}\label{F2}
		(\frac{\dot{a}}{a})^{2}=\frac{8\pi G\varrho}{3}-\kappa\frac{c^{2}}{a^{2}}-\frac{c^{2}\varPhi_{00}}{3}
	\end{equation}where the dot refers to the derivative with respect to $ x_{0} $ and $ \kappa $ is a parameter representing one of three spatial geometries in the FLRW metric. $ a(t) $ is the cosmological scale factor, which satisfies $ a>0 $ after the Big Bang at $ t=0 $. $ P_{d} $ is proportional to the gravitational pressure. With the Hubble parameter defined as $ H:=\frac{\dot{a}}{a} $, it follows that $ \frac{\ddot{a}}{a}=\dot{H}+H^{2} $ and equations (\ref{CE}), (\ref{F1}) and (\ref{F2}) combine to yield
	\begin{equation}\label{BB}
		-\frac{8\pi G}{c^{2}}\frac{d\varrho}{da}+\frac{d\varPhi_{00}}{da}=\frac{6\kappa}{a^{3}}-\frac{6\dot{H}}{ac^{2}}.
	\end{equation}If $ \dot{H} $ is constant, (\ref{BB}) has the solution
	\begin{equation}\label{p00}
		\varPhi_{00}=\Lambda+\frac{8\pi G\varrho}{c^{2}}-\frac{3\kappa}{a^{2}}-\frac{6\dot{H}\ln a}{c^{2}}
	\end{equation}where $ \Lambda $ is the cosmological constant, which appears early in the primordial epoch when there is no ordinary matter. Immediately following the Big Bang, an infinitesimal slice of a maximally symmetric Lorentzian spacetime with an enormous gravitational energy density, violently expands into a Universe with an extremely large scale factor.  
\end{enumerate}\par Thus in MGR, the cosmological constant appears as the constant term of the gravitational energy density in the FLRW metric. It is not part of the Einstein-Hilbert action functional for gravity; $ \Phi $ dynamically replaces it. The sign of the cosmological constant is determined from equation (\ref{F2}), which demands the inequality 
$\frac{8\pi G\varrho}{c^{2}}-\frac{3\kappa}{a^{2}}-\varPhi_{00}>0$ so that $ \Lambda<\frac{6\dot{H}\ln a}{c^{2}} $. A positive cosmological constant requires $0<\Lambda<\frac{6\dot{H}\ln a}{c^{2}} $ with $ \dot{H}>0 $; otherwise, $\dot{H}\leq0  $ demands $ \Lambda $ to be negative.
\par There is increasing evidence \cite{Pes} that the Hubble parameter has increased from $ \approx\text{67 km s}^{-1} \text{Mpc}^{-1}$ in the early Universe to $ \approx\text{74 km s}^{-1}\text{Mpc} ^{-1}$ in the present epoch. That means $ \dot{H}>0 $ and $ \Lambda $ is positive, which is in agreement with the observations \cite{RiessCC, Perl}.  \par  If $ \dot{H} $ is not constant after the Big Bang, $ \varPhi_{00} $ will have additional structure in accorance with (\ref{BB}), which cannot be determined until $ \dot{H} $ is known from astronomical data. \par Equation (\ref{BB}) can be written as
\begin{equation}\label{BD}
	-\frac{8\pi G}{c^{2}}\dot{\varrho}+\dot{\varPhi_{00}}=6H(\frac{\kappa }{a^{2}}-\frac{\dot{H}}{c^{2}}), 	
\end{equation} which describes the sum of the flows of the mass and gravitational energy. MGR provides a richer formalism than the $ \Lambda$CDM model to describe various aspects of cosmology.
\subsection{The dark matter skeleton of curved spacetime}
Each point of a Lorentzian spacetime contains a myriad of quantum-metric covectors from which $ \varPhi_{\alpha\beta} $ and $ \Phi$ are constructed. Spacetime is filled with dark matter, gravitational energy-momentum, and dark energy. Immediately after the Big Bang, intense radiation prevented ordinary matter from forming. Spacetime was maximally symmetric and the associated Killing vectors rendered $ \Phi=0 $. The intense radiation did not interact with dark matter. It gravitationally clumped and formed dense pockets or wells, which broke the symmetry of spacetime. Ordinary matter accumulated in the wells by gravity. As spacetime expanded due to the presence of dark energy, filaments of dark matter formed, and ordinary matter built-up at the junctions of the filaments where gravity was the strongest. Dark matter spun a web throughout spacetime, forming the skeleton where ordinary matter accumulated due to gravity. Dark energy filled the voids in the cosmic web, and pushed them apart, subject to the global constraint (\ref{intPhi}) that keeps the Universe in balance.
\par Although it is beyond the scope of this manuscript, what needs to be done is a detailed computer model of the growth of a Lorentzian spacetime with the existence of a myriad of line element covectors as described.
\subsection{Galactic anisotropies}
The Universe has been observed to be expanding faster in one region than another in galaxy clusters \cite{Mig}, which is in tension with the isotropy of the local Universe. Since dark energy can vary with time or location, the anisotropy in the Hubble parameter may be attributed to dark energy or a bulk flow, which can be investigated from (\ref{BD}). MGR can be used as a model to test the anisotropies in galactic clusters.
\par To first order, dark energy is described by the cosmological constant, which cannot vary with time or location. $\Phi$, which dynamically replaces the cosmological constant in MGR, solves that problem. The change in time of the dark energy density, to a statistical significance of $\sim 4\sigma$, was discussed in Refs. \cite{Ris,Zhao}. Although that is one standard deviation from a strong confirmation of the time dependence of dark energy, further observations will likely confirm that conjecture.\par Dark energy and dark matter are intertwined in the description of anisotropies of galactic clusters and other aspects of the cosmos. MGR contains that intrinsic dual structure; $ \varPhi_{\alpha\beta} $ leads to a description of dark matter, and its trace with respect to the metric describes dark energy locally from the condition $ \Phi>-2\varPhi_{00} $.
\subsection{Anisotropies in the CMB}
The Cosmic microwave background radiation is the remnant radiation from the hot early days of the Universe. Evidence for dark matter comes from cosmological measurements of anisotropies in the CMB. \par The conservation equation in the FLRW metric of GR and the Friedmann equations with a cosmological constant
\begin{equation}\label{LF}
	\begin{split}
		\dot{\varrho}=-\frac{3\dot{a}}{a}(\varrho+\frac{p}{c^{2}}),\;
		\frac{\ddot{a}}{a}=-\frac{4\pi G}{3}(\varrho+\frac{3p}{c^{2}})+\frac{\Lambda c^{2}}{3},
		(\frac{\dot{a}}{a})^{2}=\frac{8\pi G\varrho}{3}-\kappa\frac{c^{2}}{a^{2}}+\frac{c^{2}\Lambda}{3}
	\end{split}
\end{equation}
are fundamental to the $ \Lambda $CDM model used to study the anisotropies of the CMB. However, the corresponding equations of MGR offer a richer environment with which to explore those anisotropies. That is immediately apparent by setting the gravitational energy-density $ \varPhi_{00}$ (in units of $L^{-2}  $) to the constant value $ -\Lambda $ and the gravitational pressure-density $ P_{d} $ to $ c^{2}\Lambda $ in equations (\ref{CE}), (\ref{F1}) and (\ref{F2}), which generates equations (\ref{LF}), respectively. $ \Phi $ is the energy-momentum density of the gravitational field (in units $ L^{-2} $), which incorporates both $ \varPhi_{00}$ and $ P_{d} $: $ \Phi=-\varPhi_{00}+\frac{3P_{d}}{c^{2}} $. $ \Phi $ dynamically replaces the cosmological constant in MGR and has the constant value of $ 4\Lambda $ to locally mimic the noted equations in the FLRW metric of GR.\par The dark energy density parameter of the $ \Lambda $CDM model $ \Omega_{\Lambda} $ and the dark matter density parameter $ \Omega_{d} $ can both be replaced by a new parameter $ \Omega_{\Phi} $ in MGR; subject to the constraint (\ref{intPhi}):$ \int \Phi\sqrt{-g}d^{4}x=0 $. The constraint does not permit a Universe described by a Lorentzian metric to rip apart from uncontrolled expansion, or to contract to oblivion.\par The techniques of the $ \Lambda $CDM model can then be applied to the total density parameter $ \Omega $, which includes $ \Omega_{\Phi} $ but neither $ \Omega_{d} $ nor $ \Omega_{\Lambda} $. Detailed calculations of the perturbations from a flat FLRW metric of both $ \varPhi_{00} $ and $ P_{d} $ would be required to obtain the angular temperature variations $ \frac{\Delta T}{T} $ of the CMB in terms of the spherical harmonic functions. Although the dark matter density affects all peaks in the angular variations of  $ \frac{\Delta T}{T} $, the third peak is highly determined by dark matter; a third peak with a height greater than or equal to that of the second peak evidences that dark matter dominated the matter density in the hot dense plasma before recombination. Since $ \Phi $ incorporates both dark energy and dark matter, it is conjectured that MGR can produce the angular temperature variations of the $ \Lambda $CDM model. 
\section{Conclusion}
Einstein's original postulate of a total energy-momentum tensor, which includes both  gravitating matter and gravitational energy-momentum, is reinstated in MGR. From the ODT, a \emph{connection-independent} tensor $ \varPhi_{\alpha\beta} $ is introduced that represents the energy-momentum of the gravitational field. Using Lovelock's theorem and the ODT generates the \emph{complete} Einstein equation in one line, which exemplifies the mathematical beauty of MGR. $ \varPhi_{\alpha\beta} $ solves the problem of the non-localization of gravitational energy-momentum in GR, preserves the ontology of the Einstein equation and maintains the equivalence principle. \par The line element field vectors $\bm{X}  $ and $\bm{u}  $ provide extra freedom to geometrically describe dark energy and the missing invisible mass attributed to dark matter. The extended Schwarzschild solution is derived from the matter-free Einstein equation of MGR, which depends on the line element field covectors. Gravity gravitates, and the static gravitational energy density is shown to be twice the Newtonian result plus a contribution involving dark energy.\par The modified Newtonian force contains two additional terms; one depends on dark matter and the other on dark energy. When the dark energy force exactly balances the Newtonian force, the power-four Tully-Fisher relation follows. Using that, the invisible mass halo of galaxy NGC 3198 within its extended flat rotation curve is calculated to be identical to that obtained in GR with the NFW dark matter profile. \par  Since dark matter and dark energy exist at every point in a Lorentzian spacetime,  it is not possible for galaxies to be purely baryonic and void of dark matter. However, the Newtonian term in the modified Newtonian force can dominate the dark forces. That is evident for the ultra-diffuse galaxy AGC 114905, where the Newtonian force is about 17 times larger than the sum of the two dark forces. \par The three classic tests of GR are calculated in the extended Schwarzschild metric to compare the testable differences due to the line element covectors in MGR. In the solar system, the dark matter correction in MGR competes with the quadrupole moment of the Sun, but the dark matter correction to the Shapiro time delay is minute. However, for the gravitational lensing of galactic clusters, MGR provides significant corrections to the GR result without a dark matter profile. The strong lensing from galactic cluster SDSS J0900+2234 with an Einstein ring calculated in MGR compares well with that obtained in GR with the NFW dark matter profile, but does suggest the dark halo mass is overestimated with that profile. Moreover, the Shapiro time delay for the Crab pulsar calculated in MGR is larger than that determined in GR with typical dark matter profiles commensurate with the fact that those dark matter profiles are not valid in the interval [5,30]\,kpc from the galactic centre of the Milky Way. \par The observed anisotropy in the Hubble parameter may be attributed to dark energy, which is a natural part of MGR. It predicts the space and time dependence of the cosmological constant because the scalar $ \Phi $ dynamically replaces $ \Lambda $. The intrinsic dual structure of MGR provides a rich formalism for the description of anisotropies of galactic clusters and other aspects of the cosmos. \par 
The geometry of spacetime is fundamental to a metric theory of gravity; the effective force of gravity from all sources of matter and energy is entirely determined by the curvature of spacetime. However, there is no conflict between the geometrical and particle descriptions of dark matter because the unit line element covectors of the Lorentzian metric satisfy the Klein-Gordon wave equation that describes spin-1 particles. Those covectors are ignored in General Relativity, and it cannot explain dark matter without introducing a matter profile to represent it.
\section*{Acknowledgments}
I would like to thank the anonymous referees for their constructive comments to improve the quality of the manuscript and Professors Maurice Dupre and Olga Gil-Medrano for discussions on the ODT. 
%\vskip2pc
\appendix
\numberwithin{equation}{section}
\section {Orthogonal Decomposition Theorem}
\begin{theorem}
	A non-divergenceless (0,2) symmetric tensor $ \gamma_{\alpha\beta} $ in the symmetric cotangent bundle $ \Gamma^{\infty}_{c} (S^{2}T^{\ast}\mathcal{M}) $ with smooth sections of compact support on an n-dimensional noncompact paracompact boundaryless time-oriented Lorentzian manifold $ \mathcal{M} $ with a Levi-Civita connection can be orthogonally decomposed as $
	\gamma_{\alpha\beta}= v_{\alpha\beta}+ \varPhi_{\alpha\beta} $ where $v_{\alpha\beta}  $ represents a linear sum of symmetric divergenceless (0,2) tensors and $\varPhi_{\alpha\beta}=\frac{1}{2}\mathcal{L}_{X}g_{\alpha\beta}+\mathcal{L}_{X}(u_{\alpha}u_{\beta})$. The timelike unit vector $\bm{u}  $ is collinear with one of the pair of regular vectors in the line element field $ (\bm{X},-\bm{X}) $ and $\bm X$ is not a Killing vector.
\end{theorem}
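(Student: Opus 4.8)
The plan is to realize the claimed splitting as an $L^{2}$-orthogonal (Hodge/Berger--Ebin-type) decomposition for a single first-order differential operator. I would read the right-hand side of $\varPhi_{\alpha\beta}$ as the image of one linear map: define $P:\Gamma(T\mathcal{M})\to\Gamma(S^{2}T^{\ast}\mathcal{M})$ by $(PX)_{\alpha\beta}:=\frac{1}{2}(\nabla_{\alpha}X_{\beta}+\nabla_{\beta}X_{\alpha})+u^{\lambda}(u_{\alpha}\nabla_{\beta}X_{\lambda}+u_{\beta}\nabla_{\alpha}X_{\lambda})$, so that $\varPhi_{\alpha\beta}=(PX)_{\alpha\beta}$ with $\bm{X}=f\bm{u}$. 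The decomposition $w_{\alpha\beta}=v_{\alpha\beta}+\varPhi_{\alpha\beta}$ is then exactly the assertion that $\Gamma(S^{2}T^{\ast}\mathcal{M})=\ker P^{\ast}\oplus\operatorname{Im}P$, with $v_{\alpha\beta}\in\ker P^{\ast}$ the divergenceless summand (the one that houses the Lovelock tensors $g_{\alpha\beta}$ and $G_{\alpha\beta}$) and $\varPhi_{\alpha\beta}\in\operatorname{Im}P$. Because the Lorentzian metric supplies only an indefinite pairing, I would not project with respect to $g_{\alpha\beta}$ directly; instead I would use the positive-definite Riemannian metric $g^{+}_{\alpha\beta}=g_{\alpha\beta}+2u_{\alpha}u_{\beta}$, which exists on any paracompact manifold, to build a genuine positive-definite $L^{2}$ inner product on the compactly supported smooth sections. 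This is the device that turns the statement into one to which Hilbert-space projection and elliptic theory apply.

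First I would compute the formal adjoint $P^{\ast}$ by integrating $\langle PX,T\rangle$ by parts; the compact-support hypothesis kills all boundary terms, and $P^{\ast}$ emerges as a (metric-contracted) divergence operator, so that $\ker P^{\ast}$ is precisely the space of symmetric divergenceless tensors identified with $v_{\alpha\beta}$. Next I would check that the principal symbol $\sigma_{\xi}(P)(X)_{\alpha\beta}=\frac{1}{2}(\xi_{\alpha}X_{\beta}+\xi_{\beta}X_{\alpha})+u^{\lambda}(u_{\alpha}\xi_{\beta}X_{\lambda}+u_{\beta}\xi_{\alpha}X_{\lambda})$ is injective for every $\xi\neq0$, which makes $P$ overdetermined-elliptic and $P^{\ast}P$ a genuine second-order elliptic operator. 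The generalized Hodge decomposition $\Gamma(S^{2}T^{\ast}\mathcal{M})=\overline{\operatorname{Im}P}\oplus\ker P^{\ast}$ then holds for any closed densely-defined $P$, and ellipticity of $P^{\ast}P$ supplies the closedness of $\operatorname{Im}P$. Concretely, I would solve the elliptic equation $P^{\ast}PX=P^{\ast}w$ for $\bm{X}$, set $\varPhi_{\alpha\beta}:=(PX)_{\alpha\beta}$ and $v_{\alpha\beta}:=w_{\alpha\beta}-\varPhi_{\alpha\beta}$, and verify directly that $P^{\ast}v=P^{\ast}w-P^{\ast}PX=0$, i.e. $v_{\alpha\beta}$ is divergenceless and orthogonal to $\varPhi_{\alpha\beta}$. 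The hypothesis that $\bm{X}$ is not a Killing vector guarantees $\varPhi_{\alpha\beta}\neq0$, so the splitting is genuinely nontrivial.

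The main obstacle is analytic rather than algebraic, and it stems from the Lorentzian signature. The indefiniteness is what forbids a direct Hodge argument, and the $g^{+}$ device must be shown to be consistent: passing to $g^{+}$ yields a $g^{+}$-divergence characterizing $\ker P^{\ast}$, whereas the physics (the contracted Bianchi identity and the Lovelock representatives) wants the \emph{Lorentzian} divergence $\nabla^{\alpha}v_{\alpha\beta}=0$. If one instead defines orthogonality through the indefinite Lorentzian pairing so that $\ker P^{\ast}$ is Lorentzian-divergenceless, then $P^{\ast}P$ becomes wave-type (hyperbolic) rather than elliptic, and the clean Hilbert projection is unavailable; here I would lean on the paper's standing assumptions of global hyperbolicity and the existence of a Cauchy surface, solving the resulting hyperbolic equation for $\bm{X}$ as a well-posed initial-value problem and recovering the splitting that way. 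Reconciling these two routes—elliptic solvability on the Riemannian background versus hyperbolic solvability compatible with Bianchi—is where I expect the real work to lie. A secondary point is confirming symbol injectivity for the \emph{full} operator $P$ rather than for the bare Killing operator $\frac{1}{2}\pounds_{X}g$, since the extra $u$-dependent terms add a rank-one correction along $u_{\alpha}$; I expect injectivity to survive, but this is the computation I would check most carefully before invoking any elliptic estimate. Non-compactness of $\mathcal{M}$ is absorbed by the compact-support restriction, which both legitimizes the integration by parts and confines the solvability question to a fixed relatively compact region.
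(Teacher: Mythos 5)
Your proposal correctly identifies the right machinery (an $L^{2}$-orthogonal Berger--Ebin-type splitting taken with respect to the auxiliary Riemannian metric $g^{+}$), but it misses the one algebraic identity that the paper's proof actually turns on, and it is this omission that generates both of the difficulties you flag. Since the Lie derivative is connection-independent and $g^{+}_{\alpha\beta}=g_{\alpha\beta}+2u_{\alpha}u_{\beta}$, one has $\frac{1}{2}\pounds_{\xi}g^{+}_{\alpha\beta}=\frac{1}{2}\pounds_{\xi}g_{\alpha\beta}+\pounds_{\xi}(u_{\alpha}u_{\beta})$ for \emph{any} vector field $\xi$; that is, your operator $P$ is nothing other than the Killing (deformation) operator of the Riemannian metric $g^{+}$, up to the zeroth-order term $X^{\lambda}\nabla_{\lambda}(u_{\alpha}u_{\beta})$ which the paper discards at the end by taking $\xi=X=fu$ affinely parameterized. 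Consequently the paper does not set up elliptic theory for $P$ at all: it simply cites the classical decomposition of Berger--Ebin and Gil-Medrano--Montesinos for compactly supported symmetric $2$-tensors on the Riemannian manifold $(\mathcal{M},g^{+})$, namely $w_{\alpha\beta}=v_{\alpha\beta}+\frac{1}{2}\pounds_{\xi}g^{+}_{\alpha\beta}$ with $\nabla^{+\alpha}v_{\alpha\beta}=0$, and then rewrites the Lie-derivative summand in Lorentzian form via the identity above. Your worry about injectivity of the principal symbol of the ``full'' operator dissolves for the same reason: the extra $u$-dependent first-order terms are exactly what converts index-lowering with $g$ into index-lowering with $g^{+}$, so the symbol of $P$ \emph{is} the $g^{+}$-Killing symbol, whose injectivity is classical.

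The genuine gap is the step you explicitly defer: reconciling the divergence condition. In the $g^{+}$-based splitting, the complement of the image of $P$ consists of tensors that are divergence-free with respect to $\nabla^{+}$, whereas the theorem --- and every use made of it in the body of the paper (Lovelock tensors, the contracted Bianchi identity) --- requires the Lorentzian condition $\nabla^{\alpha}v_{\alpha\beta}=0$. You say this reconciliation ``is where the real work lies'' and offer a hyperbolic fallback, but that fallback cannot deliver the theorem: orthogonality is an $L^{2}$-projection statement, and a well-posed Cauchy problem for a wave-type operator provides no such projection --- the indefiniteness of the Lorentzian pairing is precisely why you moved to $g^{+}$ in the first place. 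The paper closes this gap with a short bridging argument: it writes the difference of the two divergences as $\nabla_{\alpha}v^{\alpha}_{\beta}-\nabla^{+}_{\alpha}v^{\alpha}_{\beta}=v^{\alpha\lambda}\partial_{\beta}(u_{\alpha}u_{\lambda})$ and argues that, the left side being a $(0,1)$ tensor while the right side is not tensorial, the right side must vanish, so $\nabla^{+\alpha}v_{\alpha\beta}=0$ forces $\nabla^{\alpha}v_{\alpha\beta}=0$. Whatever view one takes of the rigor of that tensoriality argument, something playing its role is indispensable, and your proposal contains nothing that does. A secondary gap: solving $P^{\ast}PX=P^{\ast}w$ returns a general vector field $X$, with no reason to be of the form $fu$, i.e.\ collinear with the prescribed timelike unit vector of the line element field as the statement requires; the paper addresses this by choosing the arbitrary vector of the Riemannian decomposition to be represented by $fu$, a step your variational formulation would likewise have to justify rather than assume.
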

\begin{proof}
	Let the Lorentzian manifold with metric $ (\mathcal{M},g_{\alpha\beta}) $ be noncompact paracompact, time-oriented, and boundaryless. A smooth regular line element field $(\bm{X},\bm{-X)}$ exists such that $\bm{X}$ is not a Killing vector field, as does a timelike unit vector $ \bm{u} $ collinear with one of the pair of line element vectors. Let $ \mathcal{M} $ be endowed with a smooth Riemannian metric $ g^{+}_{\alpha\beta} $. The smooth Lorentzian metric $ g_{\alpha\beta} $ is constructed from $ g^{+}_{\alpha\beta} $ and the unit covectors $ u_{\alpha}$ and $ u_{\beta}$ : $g_{\alpha\beta}=g^{+}_{\alpha\beta}-2u_{\alpha}u_{\beta} $. Let $\gamma $ and $ v $ belong to $\Gamma^{\infty}_{c} (S^{2}T^{\ast}\mathcal{M}) $, the cotangent bundle of symmetric $(0,2)$ tensors on $ \mathcal{M} $ with smooth sections of compact support, and let $K$ be a compact subset of $\Gamma^{\infty}_{c} (S^{2}T^{\ast}\mathcal{M})$  containing the support of $\delta\gamma $ such that $\delta\gamma$ is orthogonal to every $1$-form $\omega$ with the property that $\delta\delta^*\omega$ restricted to $K$ vanishes. A non-divergenceless $ (0,2) $ symmetric tensor $ \gamma_{\alpha\beta} $ can be orthogonally and uniquely decomposed with respect to $ g^{+}_{\alpha\beta} $ \cite{Berger,Gil} according to $ \gamma_{\alpha\beta}=v_{\alpha\beta}+\frac{1}{2}{\mathcal L}_Xg^{+}_{\alpha\beta}$ where $v_{\alpha\beta}  $ represents a linear sum of symmetric divergenceless (0,2) tensors and $ {\nabla^{+}}^{\alpha}v_{\alpha\beta}=0 $.
	\par The condition for decomposable with respect to $g^{+}$, $\delta\delta^{\star}\omega=0$, generates the wave equation $\square^{+} X_{\beta}=-\nabla_{\beta}^{+}\nabla_{\mu}^{+}X^{\mu}-R^{+\lambda}_{\beta}$, for which solutions exist \cite{fland}. 
	\par The divergence of $ v_{\beta}^{\alpha} $ in the mixed tensor bundle can be written as $ \nabla_{\alpha}v_{\beta}^{\alpha}=\partial_{\alpha}v_{\beta}^{\alpha}+\frac{v_{\beta}^{\alpha}}{2g}\partial_{\alpha}g-\frac{1}{2}v^{\alpha\lambda}\partial_{\beta}g_{\alpha\lambda} $ where $g=det(g_{\alpha\beta})$. This generates the (0,1) tensor
	\begin{equation}\label{Dv}
		\begin{split}
			\nabla_{\alpha}v_{\beta}^{\alpha}-\nabla^{+}_{\alpha}v_{\beta}^{\alpha}=v_{\beta}^{\alpha}\partial_{\alpha}(\ln\sqrt{-g}-\ln\sqrt{g^{+}})+v^{\alpha\lambda}\partial_{\beta}(u_{\alpha}u_{\lambda})
		\end{split}
	\end{equation} using (\ref{gab}). The metric compatibility of $ g_{\alpha\lambda}$ demands $v^{\alpha\lambda}\nabla_{\beta}g_{\alpha\lambda}=0$. In Riemann normal coordinates, $v^{\alpha\lambda}\partial_{\beta}(u_{\alpha}u_{\lambda})=0 $ holds for all $v^{\alpha\lambda} $, and $g=-g^{+}=-1$. The right-hand side of (\ref{Dv}) vanishes, which implies  $\nabla_{\alpha}v_{\beta}^{\alpha}-\nabla^{+}_{\alpha}v_{\beta}^{\alpha}=0 $ in all coordinate systems. Thus, $\nabla_{\alpha}v^{\alpha}_{\beta}=0 $ since $\nabla^{+}_{\alpha}v_{\beta}^{\alpha}=0 $ and
	\begin{equation}\label{decomp}
		\gamma_{\alpha\beta}=v_{\alpha\beta}+\frac{1}{2}{\mathcal L}_Xg_{\alpha\beta}+{\mathcal L}_X(u_{\alpha}u_{\beta})
	\end{equation}with $\nabla^{\alpha}v_{\alpha\beta}=0 $. Using $ X^{\lambda}=fu^{\lambda} $ where $f\neq0  $ is the magnitude of $X^{\lambda}  $, the expression  $X^{\lambda}\nabla_{\lambda}(u_{\alpha}u_{\beta}) $ in the second term of (\ref{decomp}) then vanishes in an affine parameterization and $
	\gamma_{\alpha\beta}=v_{\alpha\beta}+\varPhi_{\alpha\beta}	$ where $			\varPhi_{\alpha\beta}:=\frac{1}{2}(\nabla_{\alpha}X_{\beta}+\nabla_{\beta}X_{\alpha})+u^{\lambda}(u_{\alpha}\nabla_{\beta}X_{\lambda}+u_{\beta}\nabla_{\alpha}X_{\lambda})$.
	Provided $ \gamma_{\alpha\beta}\neq0 $, the decomposition is orthogonal: $<v_{\alpha\beta},\varPhi_{\alpha\beta}>=0  $. Since $\bm X$ is not a Killing vector, $\varPhi_{\alpha\beta}$ does not vanish.
	%\qed
\end{proof}
\section {Variations of the action functional}
There are three variables in MGR: $g^{\alpha\beta}$, $ X^{\beta} $ and $ u^{\beta} $. However, since $\bm{X}$ and $\bm{u}$ are collinear, $ X^{\beta}=fu^{\beta} $ and $X_{\beta}=fu_{\beta}  $ where $ f\neq0 $ is the magnitude of  both $ \bm{X} $ and its covector. Strictly speaking, the independent variables are $g^{+\alpha\beta}$, $ u^{\beta} $ and $ f $. However, since $ \frac{\delta }{\delta g^{+\alpha\beta}}=\frac{\delta }{\delta g^{\mu\nu}}\frac{\delta g^{\mu\nu}}{\delta g^{+\alpha\beta}}=\frac{\delta }{\delta g^{\alpha\beta}} $, $ g^{\alpha\beta} $ can be treated as an independent variable in the variation with respect to the inverse metric. Variations of $ S $ with respect to $g^{\alpha\beta}$, $ u^{\beta} $ and $ f $ are developed as follows:
\subsection*{Variation of $S^{G}$ with respect to $g^{\alpha\beta}$}
Variation of $ S^{EH} $ is well known from any textbook on GR. What needs to be established is the variation of $ S^{G} $ with respect to the inverse metric where 
$ S^{G}=-a\int\Phi\sqrt{-g}d^{4} $. The parameter $ a $ is not needed in this calculation, but does belong in $ S $. $ \Phi $ can be expressed as $\Phi=\nabla_{\alpha}X_{\beta}(g^{\alpha\beta}+2u^{\alpha}u^{\beta})=\nabla_{\alpha}X_{\beta}g^{+\alpha\beta}  $ so
\begin{equation}\label{dgPhi}
	\begin{split}
		-\delta \int\Phi\sqrt{-g}d^{4}x
		=-\int\delta (\nabla_{\alpha}X_{\beta})g^{+\alpha\beta}\sqrt{-g}d^{4}x-\int\nabla_{\alpha}X_{\beta}(\delta g^{\alpha\beta}+2\delta (u^{\alpha}u^{\beta}))\sqrt{-g}d^{4}x\\
		-\int\nabla_{\alpha}X_{\beta}(g^{\alpha\beta}+2u^{\alpha}u^{\beta})\delta\sqrt{-g}d^{4}x
	\end{split}
\end{equation}
The third integral in (\ref{dgPhi}) is equivalent to
\begin{equation}\label{3}
	\begin{split}
		\frac{1}{2}\int\nabla_{\mu}X_{\nu}(g^{\mu\nu}+2u^{\mu}u^{\nu})g_{\alpha\beta}\delta g^{\alpha\beta}\sqrt{-g}d^{4}x.
	\end{split}
\end{equation} \par To compute the second integral in (\ref{dgPhi}), consider $g_{\alpha\lambda} u^{\alpha}u^{\beta}=u_{\lambda}u^{\beta} $. In a Lorentzian spacetime $ \bm{u} $ satisfies $ u_{\lambda}u^{\lambda}=-1 $ but in a Riemannian spacetime, $ u_{\lambda}u^{\lambda}=1 $. Then $ \delta(u_{\lambda}u^{\beta})=0 $ 
\footnote{In $ g^{+}_{\alpha\beta}$, $u_{\lambda}u^{\lambda}=1  $ so in $ g_{\alpha\beta} $: $u_{\lambda}u^{\beta}=g^{\beta\mu}u_{\mu}u_{\lambda}\\=(g^{+\beta\mu}-2u^{\beta}u^{\mu})u_{\lambda}u_{\mu}=u_{\lambda}u^{\beta}-2u_{\lambda}u^{\beta}=-u_{\lambda}u^{\beta}$. Thus, if $\lambda=\beta, -1=-1 $ and if $\lambda\neq\beta, u_{\lambda}u^{\beta}=0  $ and $ \delta(u_{\lambda}u^{\beta})=0 $.} and $ \delta(u^{\alpha}u^{\beta})g_{\alpha\lambda}=-u^{\alpha}u^{\beta}\delta g_{\alpha\lambda} $.
Contracting that with $ g^{\lambda\rho} $ and re-labelling the indicies gives
\begin{equation}\label{deltauu}
	\delta(u^{\alpha}u^{\beta})=u^{\beta}u_{\lambda}\delta g^{\alpha\lambda}.
\end{equation}The second integral can then be expressed as $ -\int\nabla_{\alpha}X_{\beta}(\delta g^{\alpha\beta}+2u^{\beta}u_{\lambda}\delta g^{\alpha\lambda})\sqrt{-g}d^{4}x $. Keeping $ \alpha $ and $ \beta $ fixed while summing over $ \lambda $ in the integrand: if $ \lambda=\beta$, $ \nabla_{\alpha}X_{\beta}(\delta g^{\alpha\beta}+2u^{\lambda}u_{\lambda}\delta g^{\alpha\beta})=-\nabla_{\alpha}X_{\beta}\delta g^{\alpha\beta}$, and when $ \lambda\neq\beta $ the integrand is $\nabla_{\alpha}X_{\beta}\delta g^{\alpha\beta}  $ so the integrand vanishes in the sum over all $ \lambda $.
% Since the integrand of the second integral is a tensor, it vanishes in all coordinate systems where the components of $ \bm{u} $ are not neceassarily orthogonal.
\par The first integral in (\ref{dgPhi}) is now calculated.
\begin{equation}\label{1}
	\begin{split}
		-\int\delta(\nabla_{\alpha}X_{\beta})g^{+\alpha\beta}\sqrt{-g}d^{4}x=-\int(\nabla_{\alpha}\delta X_{\beta}-X_{\lambda}\delta\Gamma^{\lambda}_{\alpha\beta})g^{+\alpha\beta}\sqrt{-g}d^{4}x\\
		=2\int\nabla_{\alpha}(u^{\alpha}u^{\beta})\delta X_{\beta}\sqrt{-g}d^{4}x+\int X_{\lambda}\delta\Gamma^{\lambda}_{\alpha\beta}g^{+\alpha\beta}\sqrt{-g}d^{4}x
	\end{split}
\end{equation}integrating by parts. The second integral in (\ref{1}) is expressed as
\begin{equation}\label{I2}
	\frac{1}{2}\int X_{\lambda}g^{+\alpha\beta}g^{\lambda\rho}(\nabla_{\alpha}\delta g_{\rho\beta}+\nabla_{\beta}\delta g_{\rho\alpha}-\nabla_{\rho}\delta g_{\alpha\beta})\sqrt{-g}d^{4}x.
\end{equation}To compute this integral, we need to use:\newline $ \delta(g^{+\alpha\beta}g_{\alpha\lambda})=\delta(\delta^{\beta}_{\lambda}+2u_{\lambda}u^{\beta})=0 $ so $ \delta g^{+\alpha\beta}g_{\alpha\lambda}=-g^{+\alpha\beta}\delta g_{\alpha\lambda} $. Contracting that with $ g^{\lambda\rho} $ gives
\begin{equation}\label{g+1}
	\delta g^{+\rho\beta}=-g^{+\alpha\beta}g^{\lambda\rho}\delta g_{\alpha\lambda}.
\end{equation}
Continuing with (\ref{I2}), we can use the Lorentz connection $ \nabla $ with $ g^{+} $ in the first two terms because it can operate on the divergence of a symmetric tensor as shown in the ODT:
\begin{equation}\label{key}
	\begin{split}
		-\frac{1}{2}\int X_{\lambda}
		(\nabla_{\alpha}\delta g^{+\alpha\lambda}+\nabla_{\beta}\delta g^{+\lambda\beta})\sqrt{-g}d^{4}x-\frac{1}{2}\int X_{\lambda}(g^{\alpha\beta}+2u^{\alpha}u^{\beta})g^{\lambda\rho}\nabla_{\rho}\delta g_{\alpha\beta}\sqrt{-g}d^{4}x\\
		=-\int X_{\lambda}
		\nabla_{\alpha}\delta g^{+\alpha\lambda}\sqrt{-g}d^{4}x+\frac{1}{2}\int X_{\lambda}g_{\alpha\beta}\nabla^{\lambda}\delta g^{\alpha\beta}\sqrt{-g}d^{4}x-\int X_{\lambda}u^{\alpha}u^{\beta}\nabla^{\lambda}\delta g_{\alpha\beta}\sqrt{-g}d^{4}x\\
		=-\int X_{\beta}
		\nabla_{\alpha}\delta (g^{\alpha\beta}+2u^{\alpha}u^{\beta})\sqrt{-g}d^{4}x+\frac{1}{2}\int X_{\lambda}g_{\alpha\beta}\nabla^{\lambda}\delta g^{\alpha\beta}\sqrt{-g}d^{4}x\\+\int X_{\lambda}u_{\alpha}u_{\beta}\nabla^{\lambda}\delta g^{\alpha\beta}\sqrt{-g}d^{4}x
	\end{split}
\end{equation}using $ u^{\alpha}u^{\beta}\nabla^{\lambda}\delta g_{\alpha\beta}=u_{\mu}u_{\nu}\nabla^{\lambda}g^{\alpha\mu}g^{\beta\nu}\delta g_{\alpha\beta}=-u_{\alpha}u_{\beta}\nabla^{\lambda}\delta g^{\alpha\beta} $. Integrating by parts gives
\begin{equation}\label{key}
	\begin{split}
		\int [\nabla_{\alpha}X_{\beta}-\frac{1}{2}\nabla^{\lambda}X_{\lambda}g_{\alpha\beta}-\nabla^{\lambda}(X_{\lambda}u_{\alpha}u_{\beta})]\delta g^{\alpha\beta}\sqrt{-g}d^{4}x+2\int \nabla_{\alpha}X_{\beta}\delta (u^{\alpha}u^{\beta})\sqrt{-g}d^{4}x\\
		=\int [\nabla_{\alpha}X_{\beta}-\frac{1}{2}\nabla^{\lambda}X_{\lambda}g_{\alpha\beta}-\nabla^{\lambda}(X_{\lambda}u_{\alpha}u_{\beta})+2u^{\lambda}u_{\beta}\nabla_{\alpha}X_{\lambda}]\delta g^{\alpha\beta}\sqrt{-g}d^{4}x
	\end{split}
\end{equation}from (\ref{deltauu}) and re-labelling indicies. \par The complete variation is then
\begin{equation}\label{key}
	\begin{split}
		\int [\nabla_{\alpha}X_{\beta}-\frac{1}{2}\nabla^{\lambda}X_{\lambda}g_{\alpha\beta}-\nabla^{\lambda}(X_{\lambda}u_{\alpha}u_{\beta})+2u^{\lambda}u_{\beta}\nabla_{\alpha}X_{\lambda}\\+\frac{1}{2}\nabla_{\mu}X_{\nu}(g^{\mu\nu}+2u^{\mu}u^{\nu})g_{\alpha\beta}]\delta g^{\alpha\beta}\sqrt{-g}d^{4}x\\
		+2\int\nabla_{\alpha}(u^{\alpha}u^{\beta})\delta X_{\beta}\sqrt{-g}d^{4}x.
	\end{split}
\end{equation}Since the variations in $ \delta g^{\alpha\beta} $ are independent to those of $ \delta X_{\beta} $, the variation in S will yield
\begin{equation}\label{nuabapp}
	\nabla_{\alpha}(u^{\alpha}u^{\beta})=0
\end{equation} and the contribution from the variation of $ S^{G} $ with respect to $ g^{\alpha\beta} $ is
\begin{equation}\label{key}
	\begin{split}
		\int[\nabla_{\alpha}X_{\beta}+2u^{\lambda}u_{\beta}\nabla_{\alpha}X_{\lambda}-\nabla^{\lambda}(X_{\lambda}u_{\alpha}u_{\beta})+\nabla_{\mu}X_{\nu}u^{\mu}u^{\nu}g_{\alpha\beta}]\delta g^{\alpha\beta}\sqrt{-g}d^{4}x\\
		=\int[\nabla_{\alpha}X_{\beta}+2u^{\lambda}u_{\beta}\nabla_{\alpha}X_{\lambda}-X^{\lambda}\nabla_{\lambda}(u_{\alpha}u_{\beta})+\nabla_{\mu}X_{\nu}(u^{\mu}u^{\nu}g_{\alpha\beta}-u_{\alpha}u_{\beta}g^{\mu\nu})]\delta g^{\alpha\beta}\sqrt{-g}d^{4}x\\
		=\int[\frac{1}{2}(\nabla_{\alpha}X_{\beta}+\nabla_{\beta}X_{\alpha})+u^{\lambda}(u_{\alpha}\nabla_{\beta}X_{\lambda}+u_{\beta}\nabla_{\alpha}X_{\lambda})\\+\nabla_{\mu}X_{\nu}(u^{\mu}u^{\nu}g_{\alpha\beta}-u_{\alpha}u_{\beta}g^{\mu\nu})]\delta g^{\alpha\beta}\sqrt{-g}d^{4}x
	\end{split}
\end{equation}using the symmetry in $ \delta g^{\alpha\beta} $ and setting $-X^{\lambda}\nabla_{\lambda}(u_{\alpha}u_{\beta})=-fu^{\lambda}\nabla_{\lambda}(u_{\alpha}u_{\beta})=0  $ in an affine parameterization where $ f\neq0 $ is the magnitude of $ X^{\lambda} $. The last term in the variation vanishes, which follows by writing the tensor in brackets   $-u_{\alpha}u_{\beta}g^{\mu\nu}+u^{\mu}u^{\nu}g_{\alpha\beta}$ as its equivalent, $\frac{1}{2}(g^{{+}{\mu\nu}}g_{\alpha\beta}-g^{+}_{\alpha\beta}g^{\mu\nu}) $, and choosing an orthonormal basis $(e_{\alpha})$ at a point $p\in\mathcal{M} $ for $ g^{+} $ with $ e_{0}=u $. Then, $ u^{0}=u_{0}=1 $, $ u^{i}=u_{i}=0\;(i=1,2,3)$, $g^{+}_{\alpha\beta}=\delta_{\alpha\beta}$, $ g^{00}=-g^{{+}00}$ and $g_{00}=-g^{+}_{00}$, with all other components of the metric g equal to those of the metric $ g^{+}$. From the definition of $\varPhi_{\alpha\beta}  $
\begin{equation}\label{}
	\varPhi_{\alpha\beta}:=\frac{1}{2}(\nabla_{\alpha}X_{\beta}+\nabla_{\beta}X_{\alpha})+u^{\lambda}(u_{\alpha}\nabla_{\beta}X_{\lambda}+u_{\beta}\nabla_{\alpha}X_{\lambda})
\end{equation}it follows that 
$ \delta{S}^{G}=-a\delta \int\Phi\sqrt{-g}d^{4}x=a\int\varPhi_{\alpha\beta}\delta g^{\alpha\beta}\sqrt{-g}d^{4}x.$
\subsection*{Variation of $S$ with respect to $f$}
$ f\neq0 $, the magnitude of $ X^{\beta} $, is independent of $ u^{\beta} $ so variation of $ S $ with respect to $ f $ must be performed. With $ X^{\beta} $ and its first derivative not appearing in the matter energy-momentum tensor, variation with respect to $ f $ involves only $ S^{G} $:
\begin{equation}\label{key}
	\begin{split}
		S^{G}=-a\int[\nabla_{\alpha}(fu^{\alpha})+2u^{\alpha}u^{\beta}\nabla_{\alpha}(fu_{\beta})]\sqrt{-g}d^{4}x\\
		=-a\int(f\nabla_{\alpha}u^{\alpha}-u^{\alpha}\nabla_{\alpha}f)\sqrt{-g}d^{4}x
	\end{split}
\end{equation}
using $ X_{\alpha}=fu_{\alpha}$ and $u^{\beta}\nabla_{\alpha}u_{\beta}=0$. Hence, 
\begin{equation}\label{key}
	\begin{split}
		\delta S^{G}&=-a\int(\nabla_{\alpha}u^{\alpha}\delta f-u^{\alpha}\delta\nabla_{\alpha}f)\sqrt{-g}d^{4}x\\
		&=-2a\int\nabla_{\alpha}u^{\alpha}\delta f\sqrt{-g}d^{4}x.
	\end{split}
\end{equation} It follows that 
\begin{equation}\label{bnu0}
	\nabla_{\alpha}u^{\alpha}=0 
\end{equation} for arbitrary variations of f. $ \nabla_{\alpha}u^{\alpha}=0 $ can also be obtained from (\ref{nuab}) in an affine parameterization.
\subsection*{Variation of $S$ with respect to $u^{\nu}$} 
The action functionals $S^{F}$ and $S^{EH}$ do not depend on $X^{\mu}$ and  $\frac{\delta X^{\mu}}{\delta u^{\nu}}=f\delta^{\mu}_{\nu}$ is well defined, so the variation with respect to $u^{\nu}$ of $S$ depends only on $S^{G}$:
\begin{equation}\label{Su}
\frac{\delta S^{G}}{\delta u^{\nu}}=-\int\frac{\delta }{\delta u^{\nu}}(\Phi\sqrt{-g})d^{4}x.
\end{equation}
From the collinearity $X_{\alpha}=fu_{\alpha}$, $\varPhi_{\alpha\beta}$ can be expressed as
\begin{equation}\label{phiu}
	\varPhi_{\alpha\beta}=\frac{f}{2}(\nabla_{\alpha}u_{\beta}+\nabla_{\beta}u_{\alpha})-\frac{1}{2}(u_{\beta}\nabla_{\alpha}f+u_{\alpha}\nabla_{\beta}f).
\end{equation}Using (\ref{bnu0}), $\Phi$ can be expressed as
\begin{equation}\label{Phiu}
	\Phi=-u^{\alpha}\nabla_{\alpha}f.
\end{equation}
Then, (\ref{Su}) becomes
\begin{equation}\label{dSu}
		\frac{\delta S^{G}}{\delta u^{\nu}}=\int[(\delta^{\alpha}_{\nu}\partial_{\alpha}f+u^{\alpha}\frac{\delta}{\delta u^{\nu}}(\partial_{\alpha}f) )\sqrt{-g}-\Phi\frac{\delta\sqrt{-g}}{\delta u^{\nu}}]d^{4}x
	\end{equation} The third term is
\begin{equation}\label{sqrtg}
	\begin{split}
		-\Phi\frac{\delta\sqrt{-g}}{\delta u^{\nu}}&=-\Phi\frac{\delta\sqrt{-g}}{\delta g^{\alpha\beta}}\frac{\delta g^{\alpha\beta}}{\delta u^{\nu}}\\
		&=-\Phi g_{\alpha\beta}\sqrt{-g}(\delta^{\alpha}_{\nu}u^{\beta}+\delta^{\beta}_{\nu}u^{\alpha})\\
		&=-2\Phi\sqrt{-g}u_{\nu}
	\end{split}
\end{equation}so (\ref{dSu}) can be written as
\begin{equation}
	\frac{\delta S^{G}}{\delta u^{\nu}}=\int(\partial_{\nu}f+u^{\alpha}\frac{\delta}{\delta u^{\nu}}(\partial_{\alpha}f)-2\Phi u_{\nu})\sqrt{-g})d^{4}x
\end{equation}

 Setting $\partial_{\alpha}f=u_{\alpha}P$ for some scalar $P$ requires
\begin{equation}
	u^{\alpha}\frac{\delta}{\delta u^{\nu}}(\partial_{\alpha}f)=-\frac{\delta P}{\delta u^{\nu}}-P u_{\nu}
\end{equation}using 
\begin{equation}
	u^{\alpha}\frac{\delta u_{\alpha}}{\delta u^{\nu}}=-u_{\nu}
\end{equation} from $u_{\alpha}u^{\alpha}=-1$. Thus, 
\begin{equation}
\frac{\delta S^{G}}{\delta u^{\nu}}=-\int(\frac{\delta \Phi}{\delta u^{\nu}}\sqrt{-g}+\Phi \frac{\delta\sqrt{-g}}{\delta u^{\nu}})d^{4}x=\int(\partial_{\nu}f-\frac{\delta P}{\delta u^{\nu}}-(P+2\Phi) u_{\nu})\sqrt{-g})d^{4}x=0.
\end{equation}Choosing $P=\Phi$ generates
\begin{equation}\label{bunu}
	\partial_{\nu}f=\Phi u_{\nu}.
\end{equation}
\par Note that (\ref{bnu0}) and (\ref{bunu}) are consistent with $ \Phi=\nabla_{\alpha}X^{\alpha}+2u^{\alpha}u^{\beta}\nabla_{\alpha}X_{\beta} $:\begin{equation}\label{PhiPhi}
	\begin{split}
		\Phi&=\nabla_{\alpha}X^{\alpha}+2u^{\alpha}u^{\beta}\nabla_{\alpha}X_{\beta}\\
		&=f\nabla_{\alpha}u^{\alpha}+u^{\alpha}\partial_{\alpha}f+2u^{\alpha}u^{\beta}u_{\beta}\partial_{\alpha}f+2fu^{\alpha}u^{\beta}\nabla_{\alpha}u_{\beta}\\
		&=u^{\alpha}\partial_{\alpha}f-2u^{\alpha}\partial_{\alpha}f\\
		&=-u^{\alpha}u_{\alpha}\Phi\\
		&=\Phi.
	\end{split}
\end{equation}
\section{Derivation of the spin-1 Klein-Gordon equation from the line element field}
The action functional considered is: $ S^{1}=-\frac{1}{2}\int[\nabla_{\alpha}X_{\beta}\nabla^{\alpha}X^{\beta}+k^{2}X_{\beta}X^{\beta}]\sqrt{-g}d^{4}x $ where $ X^{\beta} $ is an arbitrary line element vector. It is worthwhile to express $ S^{1} $ in terms of the independent variables $ f $ and $ u_{\beta} $ as:
\begin{equation}\label{S^{1}}
	\begin{split}
		S^{1}=-\frac{1}{2}\int(f^{2}\nabla_{\alpha}u_{\beta}\nabla^{\alpha}u^{\beta}-\nabla_{\alpha}f\nabla^{\alpha}f
		-f^{2}k^{2})\sqrt{-g}d^{4}x\\
		=\frac{1}{2}\int (f^{2}u_{\beta}\square u^{\beta}+\nabla_{\alpha}f\nabla^{\alpha}f
		+f^{2}k^{2})\sqrt{-g}d^{4}x
	\end{split}
\end{equation}using $ u^{\beta}\nabla_{\alpha}u_{\beta}=0 $ and integrating by parts. The variation of $ S^{1} $ with respect to $ f $ is: 
\begin{equation}\label{dS1}
	\begin{split}
		\delta S^{1}=\int [fu_{\beta}(\square u^{\beta}-k^{2}u^{\beta})\delta f+\nabla^{\alpha}f\delta\nabla_{\alpha}f]\sqrt{-g}d^{4}x\\
		=\int [fu_{\beta}(\square u^{\beta}-k^{2}u^{\beta})-\square f)]\delta f\sqrt{-g}d^{4}x
	\end{split}
\end{equation}after integrating by parts. Since $ fu_{\beta}\neq0 $, it follows from the variation that
\begin{equation}\label{KGu}
	\square u^{\beta}=k^{2}u^{\beta}
\end{equation} provided $\square f=0$. \par From (\ref{unu}), $ \nabla_{\alpha}f=\Phi u_{\alpha} $, so $\square(fu^{\beta})=u^{\beta}\square f+f\square u^{\beta}$ in an affine parameterization. Imposing the constraint
\begin{equation}\label{con}
	\square f=0
\end{equation}allows the variation from (\ref{dS1}) to be expressed as
\begin{equation}\label{key}
	\int u_{\beta}[\square(fu^{\beta})-fk^{2}u^{\beta}]\sqrt{-g}d^{4}x=0
\end{equation}from which
\begin{equation}\label{KGX}
	\square X^{\beta}=k^{2}X^{\beta}
\end{equation}since $ u_{\beta}\neq0 $. $X^{\beta} $ satisfies the spin-1 KG equation provided its magnitude fulfils the homogeneous wave equation $ \square f=0 $ and (\ref{unu}) holds, which from (\ref{wf}) demands $ \pounds_{u}\Phi=0 $. Given $ \square u^{\beta}=k^{2}u^{\beta}$, $\square X^{\beta}=k^{2}X^{\beta}$ iff $\square f=0$ and $ \nabla_{\alpha}f=\Phi u_{\alpha} $  where $X^{\beta}=fu^{\beta}$.

\end{document}